\documentclass{LMCS}

\def\doi{8 (2:09) 2012}
\lmcsheading%
{\doi}
{1--30}
{}
{}
{Nov.~20, 2011}
{Jun.~\phantom04, 2012}
{}

\usepackage{enumerate}
\usepackage{hyperref}
\usepackage{stmaryrd}
\usepackage{verbatim}
\usepackage{fancyvrb}

\usepackage[all]{hypcap}
\hypersetup{pdftitle=Refining Inductive Types,
            pdfauthor=Robert Atkey and Patricia Johann and Neil Ghani}

\usepackage{amsmath}
\usepackage{amssymb}
\usepackage{stmaryrd}
\usepackage[all]{xy}

\newcommand{\tyname}[1]{\texttt{#1}}
\newcommand{\consname}[1]{\texttt{#1}}
\newcommand{\neutral}[1]{\texttt{#1}}

\newcommand{\inn}{\mathit{in}}
\newcommand{\cat}[1]{\mathcal{#1}}
\newcommand{\sepbar}{\mathrel|}
\newcommand{\fold}[1]{\llparenthesis #1 \rrparenthesis}

\newcommand{\Fam}{\mathrm{Fam}}

\newcommand{\Set}{\mathrm{Set}}
\newcommand{\Alg}{\mathrm{Alg}}
\newcommand{\palg}{U^{\mathsf{Alg}}}

\newcommand{\reindAlg}{*{\mathsf{Alg}}}
\newcommand{\truthalg}{\top^{\mathsf{Alg}}}
\newcommand{\compralg}[1]{\{#1\}^{\mathsf{Alg}}}
\newcommand{\ok}{\mathsf{ok}}
\newcommand{\fail}{\mathsf{fail}}

\newtheorem{eorollary}{Example}

\newcommand{\parenref}[1]{\hyperref[#1]{(\ref*{#1})}}
\newcommand{\lemref}[1]{\hyperref[#1]{Lemma \ref*{#1}}}
\newcommand{\thmref}[1]{\hyperref[#1]{Theorem \ref*{#1}}}
\newcommand{\propref}[1]{\hyperref[#1]{Proposition \ref*{#1}}}
\newcommand{\corref}[1]{\hyperref[#1]{Corollary \ref*{#1}}}
\newcommand{\conref}[1]{\hyperref[#1]{Conjecture \ref*{#1}}}
\newcommand{\exref}[1]{\hyperref[#1]{Example \ref*{#1}}}

\newcommand{\pullbackcorner}[1][dr]{\save*!/#1-1.2pc/#1:(-1,1)@^{|-}\restore}

\newcommand{\adjunction}[4]{\xymatrix{ {#1} \ar@/_/[r]_{#4}
    \ar@{}[r]|{\perp}& \ar@/_/[l]_{#3} {#2} }}

\begin{document}

\title[Refining Inductive Types]{Refining Inductive Types}

\author[R.~Atkey]{Robert Atkey}
\address{University of Strathclyde, UK}
\email{\{Robert.Atkey,Patricia.Johann,Neil.Ghani\}@cis.strath.ac.uk}

\author[P.~Johann]{Patricia Johann}

\author[N.~Ghani]{Neil Ghani}

\keywords{inductive types, dependent types, category theory, fibrations, 
refinement types}
\subjclass{D.3.3; F.3.3; D.3.1; F.3.2; F.3.1}

\begin{abstract}
  Dependently typed programming languages allow sophisticated
  properties of data to be expressed within the type system. Of
  particular use in dependently typed programming are indexed types
  that refine data by computationally useful information.  For
  example, the $\mathbb{N}$-indexed type of vectors refines lists by
  their lengths. Other data types may be refined in similar ways, but
  programmers must produce purpose-specific refinements on an {\em ad
    hoc} basis, developers must anticipate which refinements to
  include in libraries, and implementations must often store redundant
  information about data and their refinements. In this paper we show
  how to generically derive inductive characterisations of refinements
  of inductive types, and argue that these characterisations can
  alleviate some of the aforementioned difficulties associated with
  {\em ad hoc} refinements. Our characterisations also ensure that
  standard techniques for programming with and reasoning about
  inductive types are applicable to refinements, and that refinements
  can themselves be further refined.
\end{abstract}

\maketitle

\section{Introduction}\label{sec:introduction}

One of the key aims of current research in functional programming is
to reduce the {\em semantic gap} between what programmers know about
computational entities and what the types of those entities can
express. One particularly promising approach to closing this gap is to
{\em index} types by extra information that can be used to express
properties of their elements. For example, most functional languages
support a standard list data type parameterised over the type of the
data lists contain, but for some applications it is also convenient to
be able to state the length of a list in its type. This makes it
possible, for instance, to ensure that the list argument to the
\verb|tail| function has non-zero length --- i.e., is non-empty ---
and that the lengths of the two list arguments to \verb|zip| are the
same. Without this kind of static enforcement of preconditions,
functions such as these must be able to signal erroneous arguments ---
perhaps using an error monad, or a built-in exception facility --- and
their clients must be able to handle the cases in which an error is
raised.

A data type that equips each list with its length can be defined in
the dependently typed language Agda 2~\cite{agda10} using the
following declaration:
\begin{verbatim}
data Vector (B : Set) : Nat -> Set where
  nil  : Vector B zero
  cons : {n : Nat} -> B -> Vector B n -> Vector B (succ n)
\end{verbatim}

\noindent
This declaration\footnote{The \texttt{\{X : S\}} notation indicates
  that there is an implicit parameter of type \texttt{S}, named
  \texttt{X}. When applying a function with an implicit argument, Agda
  2 will attempt to infer a suitable value for it.} inductively
defines, for each choice of element type \verb|B|, a data type
\verb|Vector B| that is indexed by natural numbers and has two
constructors: $\consname{nil}$, which constructs a vector of data with
type \verb|B| of length zero (here represented by the data constructor
$\consname{zero}$ for the natural numbers), and $\consname{cons}$,
which constructs from an index $\neutral{n}$, an element of
$\tyname{B}$, and a vector of data with type \verb|B| of length
$\neutral{n}$, a new vector of data with type \verb|B| of length
$\neutral{n} + 1$ (here represented by the application
$\consname{succ}\ \neutral{n}$ of the data constructor
$\consname{succ}$ for the natural numbers to $\neutral{n}$). The
inductive type \verb|Vector B| can be used to define functions on
lists with elements of type \verb|B| that are ``length-aware'' in a
way that functions processing data of standard list types cannot
be. For example, it allows length-aware \verb|tail| and \verb|zip|
functions to be given via the following Agda 2 types and definitions:

\begin{verbatim}
tail : {B : Set} -> {n : Nat} -> Vector B (succ n) -> Vector B n
tail (cons b bs) = bs

zip  : {B C : Set} -> {n : Nat} -> 
         Vector B n -> Vector C n -> Vector (B x C) n
zip  nil         nil        = nil
zip (cons b bs) (cons c cs) = cons (b , c) (zip bs cs)
\end{verbatim}

\noindent
Examples such as these suggest that indexing types by computationally
relevant information has great potential. However, for this potential
to be realised we must better understand how indexed types can be
constructed. Moreover, since we want to ensure that all the techniques
that have been developed for structured programming with and
principled reasoning about inductive types\footnote{Recall that an
  inductive data type is one that an be represented as the least fixed
  point $\mu F$ of a functor $F$ on a category suitable for
  interpreting the types in a language.} --- such as those championed
in the Algebra of Programming~\cite{bdm97} literature --- are
applicable to the resulting indexed types, we also want these types to
be inductive.  This paper therefore asks the following fundamental
question:

\vspace*{0.1in}

\begin{quote}
  \em Can elements of one inductive type be systematically augmented
  with computationally relevant information to construct an indexed
  inductive type that captures the computationally relevant
  information in their indices? If so, how?
\end{quote}

\vspace*{0.1in}

\noindent
That is, how can we {\em refine} an inductive type to get a new type,
called a {\em refinement}, that associates to each element of the
original type its index, and how can we ensure that the resulting
refinement is inductive?

\subsection{A Naive Solution}

One straightforward way to refine an inductive type is to use a
refinement function to compute the index for each of its elements and
then to associate these indices to their corresponding elements. To
refine lists by their lengths, for example, we would start with the
standard list data type, which has the following Agda 2
declaration\footnote{Agda 2 allows overloading of constructor names,
  so we reuse the constructor names \texttt{nil} and \texttt{cons}
  from the \texttt{Vector} type defined above.}:
\begin{verbatim}
data List (B : Set) : Set where
  nil  : List B
  cons : B -> List B -> List B
\end{verbatim}
We would then define the following function \verb|length| by
structural recursion on elements of \texttt{List B}\,:
\begin{verbatim}
length : {B : Set} -> List B -> Nat
length nil         = zero
length (cons _ bs) = succ (length bs)
\end{verbatim}
From these we would construct the following refinement of lists by the
function \verb|length|, using a subset type: 
\begin{equation}\label{eqn:vector}
  \texttt{ListWithLength}\ \texttt{B}\ \texttt{n} \cong \{ \texttt{bs} :
  \texttt{List B}
  \sepbar \texttt{length}\ \texttt{bs} = \texttt{n} \}
\end{equation}
(alternatively, we could have also used a $\Sigma$-type to hold the
list $\texttt{bs}$ and the proof that $\texttt{length}\ \texttt{bs} =
\texttt{n}$.)  Note that this construction is {\em global} in that
both the data type and the collection of indices exist {\em a priori},
and the refinement is obtained by assigning, {\em post facto}, an
appropriate index to each data type element. It also suffers from a
serious drawback: the resulting refinement --- \verb|ListWithLength B|
here --- is not presented as an inductive type, so the naive solution
is not a solution to the fundamental question posed above. (In
addition, the refinement \verb|ListWithLength B| does not obviously
have anything to do with the type \texttt{Vector B}.) So the question
remains: how do we get the inductive type \texttt{Vector B} from the
inductive type \texttt{List B}?

\subsection{A Better Solution}\label{sec:better-solution}

When the given refinement function is computed by structural recursion
(i.e., by the fold) over the data type to be refined --- as is the
case for the function \verb|length| above and is often the case in
practice --- then we can give an alternative construction of
refinements that provides a comprehensive answer to the fundamental
question raised above.  In this case we can construct, for each
inductive type $\mu F$ and each $F$-algebra $\alpha$ whose fold
computes the desired refinement function, a functor $F^\alpha$ whose
least fixed point $\mu F^\alpha$ is the desired refinement. This
construction is the central contribution of the paper.  Our
characterisation of the refinement of $\mu F$ by $\alpha$ as the
inductive type $\mu F^\alpha$ allows the entire arsenal of structured
programming techniques based on initial algebras to be brought to bear
on the resulting refinement. By contrast with the construction
in \parenref{eqn:vector} above, our characterisation is also {\em
  local}, in that the indices of recursive substructures are readily
available {\em at the time a structurally recursive program is
  written}, rather than needing to be computed by inversion at run
time from the index of the input data structure to the program.

For each functor $F$ and $F$-algebra $\alpha$, the functor
$F^{\alpha}$ that we construct is intimately connected with the
generic structural induction rule for the inductive type $\mu F$, as
presented by Hermida and Jacobs \cite{hermida98structural} and by
Ghani, Johann, and Fumex \cite{ghani10induction}. This is perhaps not
surprising: structural induction proves properties of functions
defined by structural recursion on elements of inductive types. If the
values of such functions are abstracted into the indices of associated
indexed inductive types, then their computation need no longer be
performed during inductive proofs. In essence, work has been shifted
away from computation and onto data. Refinement can thus be seen as
supporting reasoning by structural induction ``up to'' the index of a
term.

\subsection{The Structure of this Paper}

The remainder of this paper is structured as follows. In
\autoref{sec:f-algebras} we introduce inductive types and recall their
representation as carriers of initial algebras of functors. We first
recall that, for any functor $F$, the collection of $F$-algebras
forms a category, and then give a key theorem due to Hermida and
Jacobs~\cite{hermida98structural} relating different $F$-algebras and,
thereby, different refinements of $\mu F$. In \autoref{sec:families}
we define the fibrational framework for refinements with which we work
in this paper, and introduce the important idea of the lifting of a
functor. In \autoref{sec:refining-inductive} we show how liftings can
be used to refine inductive types, prove the correctness of our
construction of refinements, and illustrate our construction with some
simple examples.  In \autoref{sec:indexed-refinement} we show how to
refine inductive types that are themselves already indexed, thus
extending our construction to allow refinement of the whole range of
indexed inductive types available in dependently typed languages.  In
\autoref{sec:partial-refinement} we further extend our basic
refinement technique to allow partial refinement, in which indexed
types are constructed from inductive types not all of whose elements
have indices. Our motivating example here is that of expressions that
can fail to be well-typed. Indeed, we refine the type of possibly
ill-typed expressions by a type checker to yield the indexed inductive
type of well-typed expressions. In \autoref{sec:zygo-refine} we extend
the basic notion of refinement in yet another direction to allow
refinement by paramorphisms --- also known as primitive recursive
functions --- and their generalisation zygomorphisms. Perhaps
surprisingly, this takes us from the world of indexed inductive types
to indexed induction-recursion, in which inductive types and recursive
functions are defined simultaneously.  In \autoref{sec:discussion} we
conclude and discuss related and future work.

Throughout this paper, we adopt a semantic approach based on category
theory because it allows a high degree of abstraction and
economy. More specifically, we develop our theory in the abstract
setting of fibrations~\cite{jacobs99book}. Nevertheless, we specialise
to the families fibration over the category of sets in order to
improve accessibility and give concrete intuitions;
\autoref{sec:families} gives the necessary definitions and
background. Moreover, carefully using only the abstract structure of
the families fibration allows us to expose crucial structure that
might be lost were a specific programming notation to be used. This
structure both simplifies our proofs and facilitates the iteration of
our construction detailed in \autoref{sec:indexed-refinement}. It also
highlights the commonalities between the various constructions we
present. In particular, each of the refinement processes we discuss
produces functors of the form $J \circ \hat{F}$, where $\hat{F}$ is
the lifting of the functor $F$ defining the data type $\mu F$ to be
refined. We are currently investigating whether this observation leads
to a more general theory of refinement, as well as its potential use
in structuring an implementation. A type-theoretic, rather than
categorical, answer to the fundamental question this paper addresses
has already been given by McBride~\cite{mcbride10ornaments} using his
notion of ornaments for data types (see \autoref{sec:discussion}).

\subsection{Differences from the Previously Published Version}

This paper is a revised and expanded version of the FoSSaCS 2011
conference version \cite{atkey11refinement}. Additional explanations
have been provided throughout, examples have been expanded, and some
of the material has been reordered for
clarity. \autoref{sec:indexed-ind-types}, which explains in more
detail the connection between initial algebras and the indexed
inductive types present in systems such as Agda 2, is entirely
new. \autoref{sec:zygo-refine}, which discusses the connection between
refinement by zygomorphisms and indexed inductive-recursive
definitions, is also completely new, and represents significant
further development of our basic refinement technique.

\section{Inductive Types and $F$-algebras}\label{sec:f-algebras}

A data type is {\em inductive (in a category $\cat{C}$)} if it is the
least fixed point $\mu F$ of an endofunctor $F$ on $\cat{C}$, in a
sense to be made precise in \autoref{sec:f-algs} below. For example,
if $\Set$ denotes the category of sets and functions, $\mathbb{Z}$ is
the set of integers, and $+$ and $\times$ denote the coproduct and
product, respectively, then $\mu F_{\tyname{Tree}}$ for the
endofunctor $F_{\tyname{Tree}}X = \mathbb{Z} + X \times X$ on $\Set$
represents the following data type of binary trees with integer data
at the leaves:
\begin{verbatim}
data Tree : Set where
    leaf : Integer -> Tree
    node : (Tree x Tree) -> Tree
\end{verbatim}

\subsection{$F$-algebras}\label{sec:f-algs}

Our precise understanding of inductive types comes from the
categorical notion of an $F$-algebra. If $\cat{C}$ is a category and
$F$ is an endofunctor on $\cat{C}$, then an \emph{$F$-algebra} is a
pair $(A, \alpha : FA \to A)$ comprising an object $A$ of $\cat{C}$
and a morphism $\alpha : FA \to A$ in $\cat{C}$. The object $A$ is
called the \emph{carrier} of the $F$-algebra, and the morphism
$\alpha$ is called its \emph{structure map}. We usually refer to an
$F$-algebra solely by its structure map $\alpha : FA \to A$, since the
carrier is present in the type of this map.

An \emph{$F$-algebra morphism} from $\alpha : FA \to A$ to $\alpha' :
FB \to B$ is a morphism $f : A \to B$ of $\cat{C}$ such that $f \circ
\alpha = \alpha' \circ Ff$. An $F$-algebra $\alpha : FA \to A$ is
\emph{initial} if, for any $F$-algebra $\alpha' : FB \to B$, there
exists a unique $F$-algebra morphism from $\alpha$ to $\alpha'$. If it
exists, the initial $F$-algebra is unique up to isomorphism, and
Lambek's Lemma further ensures that the\footnote{We identify
  isomorphic entities when convenient. When doing so, we write $=$ in
  place of $\cong$.}  initial $F$-algebra is an isomorphism. Its
carrier is thus the least fixed point $\mu F$ of $F$. We write $\inn_F
: F(\mu F) \to \mu F$ for the initial $F$-algebra, and
$\fold{\alpha}_F : \mu F \to A$ for the unique morphism from $\inn_F :
F(\mu F) \to \mu F$ to any $F$-algebra $\alpha : FA \to A$. We write
$\fold{-}$ for $\fold{-}_F$ when $F$ is clear from context.  Of
course, not all functors have initial algebras. For instance, the
functor $FX = (X \to 2) \to 2$ on $\Set$ does not have an initial
algebra.

In light of the above, the data type $\tyname{Tree}$ can be
interpreted as the carrier of the initial
$F_{\tyname{Tree}}$-algebra. In functional programming terms, a
function $\alpha : \mathbb{Z} + A \times A \to A$ is an
$F_{\tyname{Tree}}$-algebra, and the function $\fold{\alpha} :
\tyname{Tree} \to A$ induced by the initiality property is exactly the
application to $\alpha$ of the standard iteration function
$\verb|fold|$ for trees (actually, the application of \verb|fold| to
an ``unbundling'' of $\alpha$ into replacement functions, one for each
of $F_{\tyname{Tree}}$'s constructors). More generally, for each
functor $F$, the function $\fold{-}_F : (F A \to A) \to \mu F \to A$
is the standard iteration function for $\mu F$.

\subsection{Indexed Inductive Types as
  $F$-Algebras}\label{sec:indexed-ind-types}  

Indexed types can be inductive, and this gives rise to the notion of
an {\em indexed inductive type}. Such a type is also called an {\em
  inductive family} of types~\cite{dybjer94inductive}. Indexed
inductive types can be seen as initial $F$-algebras for endofunctors
$F$ on categories of indexed sets. For example, if $B$ is a set of
elements, then we can define a functor $F_{\tyname{Vector}_B}$ on the
category of $\mathbb{N}$-indexed sets whose least fixed point
represents the inductive data type $\tyname{Vector B}$ from
\hyperref[sec:introduction]{introduction}. The two constructors
$\texttt{nil}$ and $\texttt{cons}$ are reflected in the definition of
$F_{\tyname{Vector}_B}$ as a coproduct, the individual arguments to
each constructor are reflected as products within each summand of this
coproduct, and the implicit equality constraints on the indices are
reflected as explicit equality constraints. We define
\begin{eqnarray*}
F_{\tyname{Vector}_B} & : & (\mathbb{N} \to \Set) \to
(\mathbb{N} \to \Set)\\ 
 F_{\tyname{Vector}_B}\; X \;
 &=& \lambda n. \{* \sepbar n = 0\} + \{ (n_1 : \mathbb{N}, a : B, x : X n_1) \sepbar n = n_1 + 1 \}
\end{eqnarray*}
where the notation $\{* \sepbar n = 0\}$ denotes the set $\{*\}$ when
$n = 0$ and the empty set otherwise.  The carrier of the initial
algebra $\inn_{F_{\tyname{Vector}_B}} : F_{\tyname{Vector}_B}(\mu
F_{\tyname{Vector}_B}) \to \mu F_{\tyname{Vector}_B}$ of this functor
consists of the $\mathbb{N}$-indexed family $\mu
F_{\tyname{Vector}_B}$ of sets of vectors with elements from $B$,
together with a function $\inn_{F_{\tyname{Vector}_B}}$ that ``bundles
together'' the constructors $\texttt{nil}$ and $\texttt{cons}$. In
\autoref{sec:refining-examples} below we show how
$F_{\tyname{Vector}_B}$ can be \emph{derived} from the functor
$F_{\tyname{List}_B}$ whose least fixed point is the inductive type of
lists with elements from $B$, together with the algebra
$\mathit{lengthalg}$ whose fold is the standard length function on
lists.

In general, $X$-indexed inductive types can be understood as initial
algebras of functors $F : (X \to \Set) \to (X \to \Set)$. In
\autoref{sec:families} below we will see how the collection of
categories of indexed sets can be organised into the \emph{families
  fibration}, in which we carry out the constructions giving rise to
our framework for refinement.

\subsection{Categories of $F$-algebras}

If $F$ is an endofunctor on $\cat{C}$, we write $\Alg_F$ for the
category whose objects are $F$-algebras and whose morphisms are
$F$-algebra morphisms between them. Identities and composition in
$\Alg_F$ are taken directly from $\cat{C}$. The existence of initial
$F$-algebras is equivalent to the existence of initial objects in the
category $\Alg_F$.

In Theorems~\ref{thm:initial-fhat-algebra}
and~\ref{thm:partial-refinement} below, we will have an initial object
in one category of algebras and want to show that applying a functor
to it gives the initial object in another category of algebras. We
will use adjunctions to do this. Recall that an {\em adjunction}
$\adjunction{\cat{C}}{\cat{D}}{L}{R}$ between two categories $\cat{C}$
and $\cat{D}$ consists of a left adjoint functor $L$, a right adjoint
functor $R$, and an isomorphism natural in $A$ and $X$ between the set
$\cat{C}(LA,X)$ of morphisms in $\cat{C}$ from $LA$ to $X$ and the set
$\cat{D}(A,RX)$ of morphisms in $\cat{D}$ from $A$ to $RX$. We say
that the functor $L$ is {\em left adjoint} to $R$, and that the
functor $R$ is {\em right adjoint} to $L$, and we write $L \dashv
R$. To lift adjunctions to categories of algebras, we will make much
use of the following theorem of Hermida and Jacobs
\cite{hermida98structural}:
\begin{thm}\label{thm:alg-adjunctions}
  If $F : \cat{C} \to \cat{C}$ and $G : \cat{D} \to \cat{D}$ are
  functors, $L \dashv R$, and $F \circ L \cong L \circ G$ is a natural
  isomorphism, then the adjunction
  $\adjunction{\cat{C}}{\cat{D}}{L}{R}$ lifts to an adjunction
  $\adjunction{\Alg_F}{\Alg_G}{L'}{R'}$.
\end{thm}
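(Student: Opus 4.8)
The plan is to construct the lifted functors $L'$ and $R'$ explicitly from the data of the unlifted adjunction together with the natural isomorphism, and then to show that the adjunction bijection $\cat{C}(LA, X) \cong \cat{D}(A, RX)$ restricts to a bijection between $F$-algebra morphisms and $G$-algebra morphisms. Write $\phi : F \circ L \xrightarrow{\cong} L \circ G$ for the given natural isomorphism, with inverse $\psi = \phi^{-1}$, and write $\eta : \Id_{\cat{D}} \to R \circ L$ and $\epsilon : L \circ R \to \Id_{\cat{C}}$ for the unit and counit of $L \dashv R$, recalling that $L : \cat{D} \to \cat{C}$ and $R : \cat{C} \to \cat{D}$.

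First I would define $L' : \Alg_G \to \Alg_F$. On a $G$-algebra $g : GA \to A$, set $L'g$ to be the $F$-algebra with carrier $LA$ and structure map
\[
  F(LA) \xrightarrow{\phi_A} L(GA) \xrightarrow{Lg} LA,
\]
and on a $G$-algebra morphism $f$ set $L'f = Lf$. Checking that $Lf$ is an $F$-algebra morphism $L'g \to L'g'$ amounts to the square $Lf \circ Lg \circ \phi_A = Lg' \circ \phi_B \circ F(Lf)$, which follows from applying the functor $L$ to the $G$-algebra square $f \circ g = g' \circ Gf$ and then invoking the naturality of $\phi$ at $f$. Functoriality of $L'$ is immediate from that of $L$, and by construction $L'$ lifts $L$, in the sense that the carrier of $L'g$ is $L$ applied to the carrier of $g$.

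Next I would define $R' : \Alg_F \to \Alg_G$ by transposition. Given an $F$-algebra $a : FX \to X$, the carrier of $R'a$ is $RX$, and its structure map $G(RX) \to RX$ is the adjoint transpose of the composite
\[
  L(G(RX)) \xrightarrow{\psi_{RX}} F(L(RX)) \xrightarrow{F\epsilon_X} FX \xrightarrow{a} X,
\]
that is, $R'a = \overline{a \circ F\epsilon_X \circ \psi_{RX}}$, where $\overline{(-)}$ denotes the adjunction bijection $\cat{C}(L(G(RX)), X) \to \cat{D}(G(RX), RX)$. On morphisms set $R'f = Rf$; the verification that this is a $G$-algebra morphism is dual to the one above and uses the naturality of $\epsilon$ and $\psi$.

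Finally, to establish $L' \dashv R'$, I would show that for each $G$-algebra $g : GA \to A$ and $F$-algebra $a : FX \to X$ the adjunction bijection $\Theta : \cat{C}(LA, X) \xrightarrow{\cong} \cat{D}(A, RX)$, given by $\Theta h = Rh \circ \eta_A$, carries the $F$-algebra morphisms $L'g \to a$ exactly onto the $G$-algebra morphisms $g \to R'a$. Concretely, for $h : LA \to X$ I would prove that the $F$-algebra morphism equation $h \circ Lg \circ \phi_A = a \circ Fh$ holds if and only if its transpose satisfies the $G$-algebra morphism equation $\Theta h \circ g = R'a \circ G(\Theta h)$. Since $\Theta$ is already a bijection on all of $\cat{C}(LA,X)$, restricting it to these two subsets yields the required natural bijection $\Alg_F(L'g, a) \cong \Alg_G(g, R'a)$, with naturality in $g$ and $a$ inherited from that of $\Theta$. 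The main obstacle is precisely this equivalence: transporting the $F$-algebra square across the transpose is a diagram chase that must thread $\phi$ and $\psi$ through the triangle identities for $\eta$ and $\epsilon$ and through the naturality squares of $\phi$, $\epsilon$, and the transpose. The real care lies in organising the chase so that each naturality square and each triangle identity is used exactly once, rather than verifying the two implications by brute force.
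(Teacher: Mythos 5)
Your proposal is correct and follows the standard argument for this lifting theorem; the paper itself gives no proof (it cites Hermida and Jacobs), but the one piece of the construction it does record --- $L'(k : GA \to A) = Lk \circ p_A$ in the remark following the theorem --- agrees exactly with your definition of $L'$ on objects. Your $R'$ via transposition of $a \circ F\epsilon_X \circ \psi_{RX}$ and the verification that the hom-set bijection restricts to algebra morphisms (using naturality of $\psi$ and $\epsilon$ and one triangle identity) is the standard completion of that construction and goes through as you describe.
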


\noindent
In the setting of the theorem, if $G$ has an initial algebra, then so
does $F$ since left adjoints preserve colimits and in particular
initial objects. To compute the initial $F$-algebra in concrete
situations we need to know that $L' (k : GA \to A) = L k \circ p_A$,
where $p$ is (one half of) the natural isomorphism between $F \circ L$
and $L \circ G$. Then the initial $F$-algebra is given by applying
$L'$ to the initial $G$-algebra, i.e., $\mathit{in}_F =
L'(\mathit{in}_G)$, and hence $\mu F = L' (\mu G)$.

\section{A Framework for Refinement}\label{sec:families}

We develop our theoretical framework for refinement in the setting of
fibrational models of extensional Martin-L\"of type theory, which is a
key theory underlying dependently typed programming.  Since the
concepts and terminology of fibrational category theory will not be
familiar to most readers, we have taken care to formulate each of our
definitions and main theorems in the families fibration. The families
fibration gives the archetypal semantics of Martin-L\"of type theory,
in which indexed types are interpreted directly as indexed sets. In
this section we define the families fibration, and identify the parts
of its structure that we require for the rest of the paper. As readers
who are familiar with the categorical notion of fibration will
observe, the terminology and structure that we identify comes from
fibred category theory. We take care to identify the particular
properties of the families fibration that are required for our results
to hold, and refer to the literature for the formulation of these
properties in the general setting.

\subsection{The Families Fibration}

As is customary, we model indexed types in the category $\Fam(\Set)$.
An object of $\Fam(\Set)$ is a pair $(A,P)$ comprising a set $A$ and a
function $P : A \to \Set$; such a pair is called a {\em family} of
sets. We denote a family $(A,P)$ as $P : A \to \Set$ when convenient,
or simply as $P$ when $A$ can be inferred from context. A morphism
$(f,f^{\sim}) : (A,P) \to (B,Q)$ of $\Fam(\Set)$ is a pair of
functions $f : A \to B$ and $f^{\sim} : \forall a.\ P a \to Q(f
a)$. From a programming perspective, a family $(A,P)$ is an
$A$-indexed type $P$, where $P a$ represents the collection of data
with index $a$. An alternative, logical, view is that $(A,P)$ is a
predicate representing a property $P$ of data of type $A$, and that $P
a$ represents the collection of proofs that $a$ has property $P$. When
$P a$ is inhabited, $P$ is said to {\em hold} for $a$. When $P a$ is
empty, $P$ is said {\em not to hold} for $a$. We will freely switch
between the programming and logical interpretations of families when
providing intuition for our formal development below.

The {\em families fibration} $U : \Fam(\Set) \to \Set$ is the functor
mapping each family $(A,P)$ to $A$ and each morphism $(f,f^{\sim})$ to
$f$. The category $\Set$ is referred to as the {\em base category} of
the families fibration and $\Fam(\Set)$ is referred to as its {\em
  total category}. For each set $A$, the category $\Fam(\Set)_A$
consists of families $(A, P)$ and morphisms $(f, f^\sim)$ between them
such that $f = \mathit{id}_A$. Such a morphism is said to be a {\em
  vertical morphism}. Similarly, a {\em vertical natural
  transformation} is a natural transformation each of whose components
is a vertical morphism. We say that an object or morphism in
$\Fam(\Set)_A$ is {\em over} $A$ with respect to the families
fibration, and call $\Fam(\Set)_A$ the \emph{fibre} of the families
fibration over $A$.  A function $f : A \to B$ contravariantly
generates a \emph{reindexing functor} $f^* : \Fam(\Set)_B \to
\Fam(\Set)_A$ for the families fibration that maps $(B, Q)$ to $(A, Q
\circ f)$.

\subsection{Truth and Comprehension}\label{sec:truth-compr}

Each fibre $\Fam(\Set)_A$ has a terminal object $(A, \lambda a:A.\
1)$, where $1$ is the canonical singleton set.  In light of the
logical reading of families above, this object is called the {\em
  truth predicate} for $A$. The mapping of objects to their truth
predicates extends to a functor $\top : \Set \to \Fam(\Set)$, called
the {\em truth functor} for the families fibration.  In addition, for
each family $(A,P)$ we can define the {\em comprehension} of $(A,P)$,
denoted $\{(A,P)\}$, to be $\Sigma a \!:\! A. Pa$, i.e., $\{ (a,p)
\sepbar a \in A, p \in P a \}$. The comprehension $\{(A,P)\}$ packages
elements $a \in A$ with proofs $p \in Pa$.  The mapping of families to
their comprehensions extends to a functor $\{-\}: \Fam(\Set) \to
\Set$, called the {\em comprehension functor} for the families
fibration. Overall, we have the following pleasing collection of
adjoint relationships:

\begin{equation}\label{ex:cc-with-unit}
  \xymatrix{
    {\Fam(\Set)}
    \save[]+<0.2cm,-0.66cm>*{\footnotesize\dashv}\restore
    \save[]+<1cm,-0.66cm>*{\footnotesize\dashv}\restore
    \ar[d]_U
    \ar@/^3pc/[d]^{\{-\}}    
    \\
    {\Set}
    \ar@/_1pc/[u]_{\top}
  }
\end{equation}
The families fibration $U$ is thus a \emph{comprehension category with
  unit} \cite{jacobs93comprehension,jacobs99book}. Like every
comprehension category with unit, $U$ supports a natural
transformation $\pi : \{-\} \to U$ such that $\pi_{(A,P)}(a,p) = a$
for all $(a,p)$ in $\{(A,P)\}$, projecting out the $A$ component from
a comprehension. In fact, $U$ is a {\em full comprehension category
with unit}, i.e., the functor from $\Fam(\Set)$ to $\Set^\to$ induced
by $\pi$ is full and faithful. Here, $\Set^\to$ is the {\em arrow
  category} of $\Set$. Its objects are morphisms of $\Set$ and its
morphisms from $f:X\to Y$ to $f':X'\to Y'$ are pairs $(\alpha_1,
\alpha_2)$ of morphisms in $\Set$ such that $f' \circ \alpha_1 =
\alpha_2 \circ f$. Fullness means that the action of $\pi$ on
morphisms is surjective, and faithfulness means that it is injective.
Fullness will be used in the proof of \thmref{thm:change-of-base}
below, when we consider refinements of indexed types.

\subsection{Indexed Coproducts}\label{sec:sums}

For each function $f : A \to B$ and family $(A,P)$, we can form the
family $\Sigma_f(A,P) = (B, \lambda b.\ \Sigma_{a \in A}.\ (b = f a)
\times P a)$, called the {\em indexed coproduct of $(A,P)$ along $f$}.
The mapping of each family to its indexed coproduct along $f$ extends
to a functor $\Sigma_f : \Fam(\Set)_A \to \Fam(\Set)_B$ which is left
adjoint to the reindexing functor $f^*$ for the families fibration.
In the abstract setting of fibrations, a fibration with the property
that each re-indexing functor $f^*$ has a left adjoint $\Sigma_f$ is
called a \emph{bifibration}, and the functors $\Sigma_f$ are called
{\em op-reindexing} functors. A bifibration that is also a full
comprehension category with unit is called a {\em full cartesian
  Lawvere category}~\cite{jacobs93comprehension}. The families
fibration is a full cartesian Lawvere category.

The functors $\Sigma_f$ are often subject to the Beck-Chevalley
condition for coproducts, which is well-known to hold for the families
fibration. This condition ensures that, in certain circumstances,
op-reindexing commutes with re-indexing~\cite{jacobs99book}. It is
used in the proof of \lemref{lem:lifting-commute-reindex}.

At several places below we make essential use of the fact that the
families fibration has very strong coproducts, i.e., that in the diagram
\begin{equation}\label{eqn:sums-compr}
  \xymatrix{
    {\{(A,P)\}}
    \ar[r]^(.4){\{\psi\}}
    \ar[d]_{\pi_{(A,P)}}
    &
    {\{\Sigma_f (A,P) \}}
    \ar[d]^{\pi_{\Sigma_f (A,P)}}
    \\
    {A}
    \ar[r]^{f}
    &
    {B}
  }
\end{equation}
\noindent
where $\psi$ is the obvious map of families of sets over $f$,
$\{\psi\}$ is an isomorphism. This notion of very strong coproducts
naturally generalises the usual notion of strong
coproducts~\cite{jacobs99book}, and imposes a condition that is
standard in models of type theory.

\subsection{Indexed Products}\label{sec:indexed-products}

For each function $f : A \to B$ and family $(A,P)$ we can also form
the family $\Pi_f(A,P) = (B, \lambda b.\ \Pi_{a \in A}. (b = f a) \to
P a)$, called the {\em indexed product of $(A,P)$ along $f$}.  The
mapping of each family to its indexed product along $f$ extends to a
functor $\Pi_f : \Fam(\Set)_A \to \Fam(\Set)_B$ which is right adjoint
to the reindexing functor $f^*$ for the families fibration. Altogether
we have the following collection of relationships for each function $f
: A \to B$:
\begin{displaymath}
  \xymatrix{
    {\Fam(\Set)_B}
    \save[]+<1.35cm,0.3cm>*{\scriptsize{\perp}}\restore
    \save[]+<1.35cm,-0.25cm>*{\scriptsize{\perp}}\restore
    \ar[r]|{f^*}
    &
    {\Fam(\Set)_A}
    \ar@/_1.2pc/[l]_{\Sigma_f}
    \ar@/^1pc/[l]^{\Pi_f}    
  }
\end{displaymath}

\noindent
Like its counterpart for indexed coproducts, the Beck-Chevalley
condition for indexed products is often required and indeed it holds
in the families fibration. We do not make use of this condition in
this paper.

\subsection{Liftings}\label{sec:fibrational-ind}

The relationship between inductive types and their refinements can be
given in terms of liftings of functors. A {\em lifting} of a functor
$F : \Set \to \Set$ is a functor $\hat{F} : \Fam(\Set) \to \Fam(\Set)$
such that $F \circ U = U \circ \hat{F}$.  A lifting is {\em
  truth-preserving} if there is a natural isomorphism $\top \circ F
\cong \hat{F} \circ \top$. Truth-preserving liftings for all
polynomial functors --- i.e., for all functors built from identity
functors, constant functors, coproducts, and products --- were given
by Hermida and Jacobs~\cite{hermida98structural}.  Truth-preserving
liftings were established for arbitrary functors by Ghani \emph{et
  al.}~\cite{ghani10induction}.  Their truth-preserving lifting
$\hat{F}$ is defined on objects by
\begin{equation}
  \label{eqn:fhat}
  \hat{F}(A,P)
  \begin{array}[t]{cl}
    = & (FA, \lambda x.\ \{ y : F \{ (A,P)\} \sepbar F\pi_{(A,P)} y = x \}) \\
    = & \Sigma_{F\pi_{(A,P)}} \top(F \{(A,P)\})
  \end{array}
\end{equation}

\noindent
Reading this definition logically, we can say that $\hat{F}(A,P)$
holds for $x \in FA$ if $P$ holds for every $a \in A$ ``inside''
$x$. Thus $\hat{F}$ is a generic definition of the \emph{everywhere}
modality, as defined for containers by Altenkirch and Morris
\cite{alten09indexed}. This can be seen clearly by considering the
action of the lifting in \parenref{eqn:fhat} on polynomial functors:
\[\begin{array}{lcl}
  \widehat{Id}(A, P) &=&(A,P)\\
  \widehat{K_B}(A, P) &=&\top B = (B, \lambda x.\ 1)\\
  \widehat{(F+G)}(A, P) & = & \left(FA+GA, \ \lambda a. \mathsf{case}\ a\ \mathsf{ of}
    \left\{
   \begin{array}{l}
     \mathsf{inl}\ x \Rightarrow \hat{F}(A,P)x \\
     \mathsf{inr}\ y \Rightarrow \hat{G}(A,P)y
   \end{array}
 \right.\right)\\
 \widehat{(F \times G)}(A, P)
 & = &  (FA \times GA, \ \lambda (a,b).\ \hat{F}(A, P)a \times
 \hat{G}(A, P)b)  
\end{array}\]
The identity functor on $\Set$ does not contribute any new information
to proofs that a property holds for a given data element, so its
lifting is the identity functor on $\Fam(\Set)$. For any $B$, the
constantly $B$-valued functor $K_B$ on $\Set$ does not contribute any
inductive information to proofs, so its lifting is the truth predicate
$\top B$ for $B$. The lifting of a coproduct of functors splits into
two possible cases, depending on the value being analysed. And a
product of functors contributes proof information from each of its
components. Lifting is defined generically in terms of the functor
$F$, and so it is possible to compute the lifting of non-polynomial
functors such as the the finite powerset functor. Ghani, Johann and
Fumex \cite{ghani10induction} give further examples of the lifting
$\hat{F}$ applied to non-polynomial functors.

Below, in \hyperref[lem:lifting-commute-reindex]{Lemmas
  \ref*{lem:lifting-commute-reindex}} and
\ref{lem:lifting-commute-sigma} and
\hyperref[sec:refining-inductive]{Sections
  \ref*{sec:refining-inductive}}, \ref{sec:indexed-refinement},
\ref{sec:partial-refinement} and \ref{sec:zygo-refine}, we will be
interested in the restriction of the lifting $\hat{F}$ to fibres over
particular sets $A$. Given an object $(A,P)$ of $\Fam(\Set)_A$,
$\hat{F}(A,P)$ is an object of $\Fam(\Set)_{FA}$. Therefore, if we
restrict the domain of $\hat{F}$ to $\Fam(\Set)_A$, we get a functor
$\hat{F}_A : \Fam(\Set)_A \to \Fam(\Set)_{FA}$. The subscript $A$ on
$\hat{F}_A$ indicates that we have restricted the domain to
$\Fam(\Set)_A$.

The final expression in \parenref{eqn:fhat} is given in terms of the
constructions of \hyperref[sec:truth-compr]{Sections
  \ref*{sec:truth-compr}} and \hyperref[sec:sums]{ \ref*{sec:sums}},
so the definition of a lifting makes sense in any full cartesian
Lawvere category.

Under certain conditions, the lifting $\hat{F}$ for any functor $F$ is
well-behaved with respect to reindexing and op-reindexing. We make
this observation precise in two lemmas that will be used in our
development of both our basic (\autoref{sec:refining-inductive}) and
partial refinement techniques (\autoref{sec:partial-refinement}). To
state the first, we need the notion of a pullback; this notion will
also be used in \hyperref[sec:indexed-refinement]{Sections
  \ref*{sec:indexed-refinement}}, \hyperref[sec:partial-refinement]{
  \ref*{sec:partial-refinement}}, and \hyperref[sec:zygo-refine]{
  \ref*{sec:zygo-refine}} below. The {\em pullback} of the morphisms
$f : X \to Z$ and $g : Y \to Z$ consists of an object $W$ and two
morphisms $i : W \to X$ and $j : W \to Y$ such that $g \circ j = f
\circ i$.  We indicate pullbacks diagrammatically by
\[\xymatrix{ & W \ar[d]_{i} \ar[r]^{j} \pullbackcorner &
  \ar[d]^g Y \\
  &X \ar[r]_f &Z}\]
Moreover, for any $W'$, $i': W' \to X$, and $j': W' \to Y$ such that
$g \circ j' = f \circ i'$, there exists a unique morphism $u : W' \to
W$ such that $i \circ u = i'$ and $j \circ u = j'$.  When it exists,
the pullback of $f$ and $g$ is unique up to (unique) isomorphism.  All
container functors~\cite{DBLP:conf/fossacs/AbbottAG03}, and hence all
functors modelling strictly positive types, preserve pullbacks.

We can now state our lemmas.

\begin{lem}\label{lem:lifting-commute-reindex}
  For any functor $F : \Set \to \Set$ that preserves pullbacks,
  lifting commutes with reindexing, i.e., for all functions $f : A \to
  B$, there exists a vertical natural isomorphism $\hat{F}_A \circ f^*
  \cong (Ff)^* \circ \hat{F}_B$.
\end{lem}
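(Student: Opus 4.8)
The plan is to unfold both composites on an arbitrary object $(B,Q)$ of $\Fam(\Set)_B$ and to exhibit the isomorphism fibrewise over $FA$. Writing $f^*(B,Q) = (A, Q \circ f)$ and using the final form of the lifting in \parenref{eqn:fhat}, the left-hand side is
\[
\hat{F}_A(f^*(B,Q)) = \Sigma_{F\pi_{f^*(B,Q)}}\, \top\bigl(F\{f^*(B,Q)\}\bigr),
\]
while the right-hand side is
\[
(Ff)^*\,\hat{F}_B(B,Q) = (Ff)^*\,\Sigma_{F\pi_{(B,Q)}}\, \top\bigl(F\{(B,Q)\}\bigr).
\]
Both are objects of $\Fam(\Set)_{FA}$, so it suffices to produce a vertical isomorphism between them that is natural in $(B,Q)$.

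First I would record the pullback square expressing the stability of comprehension under reindexing in the families fibration. Concretely, the map $\tilde{f} : \{f^*(B,Q)\} \to \{(B,Q)\}$ sending $(a,q)$ to $(fa,q)$ fits into
\[
\xymatrix{
{\{f^*(B,Q)\}} \ar[d]_{\pi_{f^*(B,Q)}} \ar[r]^{\tilde{f}} \pullbackcorner & {\{(B,Q)\}} \ar[d]^{\pi_{(B,Q)}} \\
A \ar[r]_f & B
}
\]
which is a pullback, since the pullback of $f$ and $\pi_{(B,Q)}$ consists of pairs $(a,(fa,q))$ with $q \in Q(fa)$, and this set is exactly $\{f^*(B,Q)\}$. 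This is precisely where the hypothesis on $F$ is used: because $F$ preserves pullbacks, applying $F$ yields a second pullback square, now over $Ff : FA \to FB$, with top edge $F\tilde{f}$ and legs $F\pi_{f^*(B,Q)}$ and $F\pi_{(B,Q)}$.

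With this pullback in hand, the isomorphism follows from the Beck--Chevalley condition for coproducts, which holds in the families fibration. Applied to the image square under $F$, Beck--Chevalley supplies an isomorphism $(Ff)^* \circ \Sigma_{F\pi_{(B,Q)}} \cong \Sigma_{F\pi_{f^*(B,Q)}} \circ (F\tilde{f})^*$. Instantiating at $\top(F\{(B,Q)\})$, and using that reindexing preserves truth predicates — so that $(F\tilde{f})^*\,\top(F\{(B,Q)\}) = \top(F\{f^*(B,Q)\})$ — transports the right-hand side exactly onto the left-hand side. The resulting isomorphism is vertical by construction, as both the Beck--Chevalley mate and the truth-preservation identity are fibrewise over $FA$.

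The remaining obligation, and the step I expect to require the most care, is naturality in $(B,Q)$: the pullback square above itself varies with $(B,Q)$, so one must check that the chosen Beck--Chevalley isomorphisms cohere as $(B,Q)$ ranges over $\Fam(\Set)_B$. I would discharge this by observing that each ingredient — the comprehension functor, the map $\tilde{f}$, the op-reindexing and reindexing functors, and the Beck--Chevalley mate — is natural in the family argument, whence so is the composite. Alternatively, one may verify the bijection pointwise: over $x \in FA$ the pullback property identifies $\{\, y : F\{f^*(B,Q)\} \mid F\pi_{f^*(B,Q)}\,y = x \,\}$ with $\{\, y : F\{(B,Q)\} \mid F\pi_{(B,Q)}\,y = (Ff)x \,\}$ via $F\tilde{f}$, a correspondence manifestly natural in $Q$.
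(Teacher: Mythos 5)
Your proof is correct and follows exactly the route the paper intends: the paper omits an explicit proof of this lemma but states that the Beck--Chevalley condition for coproducts is what drives it, and your argument --- form the comprehension pullback square over $f$, apply $F$ (this is where pullback preservation is used), then invoke Beck--Chevalley together with preservation of the truth predicate under reindexing --- is precisely that proof. The fibrewise verification you give at the end is also the standard way to see both the isomorphism and its naturality concretely in the families fibration.
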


\begin{lem}\label{lem:lifting-commute-sigma}
  For any functor $F : \Set \to \Set$, lifting commutes with
  op-reindexing, i.e., for all functions $f : A \to B$, there exists a
  vertical natural isomorphism $\hat{F}_B \circ \Sigma_f \cong
  \Sigma_{Ff} \circ \hat{F}_A$.
\end{lem}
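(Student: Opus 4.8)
The plan is to work entirely with the second, more abstract form of the lifting in \parenref{eqn:fhat}, namely $\hat{F}(A,P) = \Sigma_{F\pi_{(A,P)}} \top(F\{(A,P)\})$, together with two structural facts about op-reindexing in a full cartesian Lawvere category: that $\Sigma$ is functorial up to natural isomorphism, i.e.\ $\Sigma_g \circ \Sigma_h \cong \Sigma_{g \circ h}$ (left adjoints compose, since reindexing is pseudofunctorial), and that op-reindexing along an isomorphism $u : X \to Y$ carries the truth predicate to the truth predicate, $\Sigma_u(\top X) \cong \top Y$. In contrast to \lemref{lem:lifting-commute-reindex}, no pullback-preservation hypothesis is needed here: the argument uses only the coproduct structure of the families fibration and the very strong coproducts condition \parenref{eqn:sums-compr}.

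First I would unfold the right-hand side. Applying the abstract form of the lifting and then $\Sigma_{Ff}$ gives
\[
  \Sigma_{Ff}(\hat{F}_A(A,P)) = \Sigma_{Ff}\,\Sigma_{F\pi_{(A,P)}}\,\top(F\{(A,P)\}) \cong \Sigma_{Ff \circ F\pi_{(A,P)}}\,\top(F\{(A,P)\}) = \Sigma_{F(f \circ \pi_{(A,P)})}\,\top(F\{(A,P)\}),
\]
where the middle step is functoriality of $\Sigma$ and the last is functoriality of $F$. This already identifies the target as an indexed coproduct of a truth predicate along $F(f \circ \pi_{(A,P)})$.

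Next I would unfold the left-hand side as $\hat{F}_B(\Sigma_f(A,P)) = \Sigma_{F\pi_{\Sigma_f(A,P)}}\,\top(F\{\Sigma_f(A,P)\})$ and bring the two expressions together using very strong coproducts. The square \parenref{eqn:sums-compr} supplies an isomorphism $\{\psi\} : \{(A,P)\} \to \{\Sigma_f(A,P)\}$ satisfying $\pi_{\Sigma_f(A,P)} \circ \{\psi\} = f \circ \pi_{(A,P)}$. Applying $F$ (which preserves isomorphisms) yields an isomorphism $v := F\{\psi\}$ with $F\pi_{\Sigma_f(A,P)} \circ v = F(f \circ \pi_{(A,P)})$. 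Since $v$ is an isomorphism we have $\top(F\{\Sigma_f(A,P)\}) \cong \Sigma_v\,\top(F\{(A,P)\})$, and hence
\[
  \hat{F}_B(\Sigma_f(A,P)) \cong \Sigma_{F\pi_{\Sigma_f(A,P)}}\,\Sigma_v\,\top(F\{(A,P)\}) \cong \Sigma_{F\pi_{\Sigma_f(A,P)} \circ v}\,\top(F\{(A,P)\}) = \Sigma_{F(f \circ \pi_{(A,P)})}\,\top(F\{(A,P)\}),
\]
which coincides with the right-hand side computed above. Composing these isomorphisms gives the required $\hat{F}_B \circ \Sigma_f \cong \Sigma_{Ff} \circ \hat{F}_A$.

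I expect the displayed rearrangements to be routine; the two points requiring care are verticality and naturality. For verticality I would confirm that every isomorphism above lives in the fibre over $FB$: the coproduct rearrangements are vertical by construction, and $v$ is an isomorphism over $FB$ precisely because the square \parenref{eqn:sums-compr} commutes over $f$ and stays commuting after applying $F$. The main obstacle is naturality in $(A,P)$: I would check that $\{\psi\}$, and therefore $v$, is natural in $(A,P)$, so that the pasted isomorphism is a genuine \emph{vertical natural} isomorphism rather than merely a pointwise one. This reduces to the naturality of the data witnessing the very strong coproducts condition, which holds in the families fibration.
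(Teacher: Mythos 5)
Your proof is correct. The paper does not actually print a proof of this lemma --- it only remarks afterwards that the result ``holds in any full cartesian Lawvere category with very strong coproducts'' --- and your argument is exactly the one that remark points to: rewrite both sides as indexed coproducts of truth predicates using the second form of \parenref{eqn:fhat}, collapse composites of $\Sigma$'s by pseudofunctoriality, and use the isomorphism $\{\psi\}$ from \parenref{eqn:sums-compr} (transported along $F$, which preserves isomorphisms) to identify $\top(F\{\Sigma_f(A,P)\})$ with $\Sigma_{F\{\psi\}}\top(F\{(A,P)\})$. Your observation that no pullback-preservation hypothesis is needed, in contrast to \lemref{lem:lifting-commute-reindex}, is consistent with how the paper states and later applies the two lemmas, and your closing remarks on verticality and on naturality of $\{\psi\}$ address the only points where care is genuinely required.
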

\noindent
More generally, \lemref{lem:lifting-commute-reindex} holds in any full
cartesian Lawvere category satisfying the Beck-Chevalley condition for
coproducts, whereas \lemref{lem:lifting-commute-sigma} holds in any
full cartesian Lawvere category with very strong coproducts.

Since $\hat{F}$ is an endofunctor on $\Fam(\Set)$, the category
$\Alg_{\hat{F}}$ of $\hat{F}$-algebras exists. The families fibration
$U : \Fam(\Set) \to \Set$ extends to a fibration $\palg :
\Alg_{\hat{F}} \to \Alg_F$, called the {\em algebras fibration}
induced by $U$. Concretely, the action of $\palg$ is the same as that
of $U$, so that $\palg (k : \hat{F}P \to P) = (Uk : FUP \to UP)$ on
objects and $\palg (h : (k_1 : \hat{F}P \to P) \to (k_2 : \hat{F}Q \to
Q)) = Uh$ on morphisms.  Moreover, writing $\truthalg{}$ and
$\compralg{-}$ for the truth and comprehension functors for $\palg$,
respectively, the adjoint relationships from
\hyperref[ex:cc-with-unit]{Diagram \ref*{ex:cc-with-unit}} all lift to
give $\palg \dashv \truthalg{} \dashv \compralg{-}$.  The two
adjunctions here follow from \thmref{thm:alg-adjunctions} using the
fact that $\hat{F}$ is a truth-preserving lifting. That left adjoints
preserve initial objects can now be used to establish the following
fundamental result, originally from Hermida and Jacobs
\cite{hermida98structural}, and generalised by Ghani \emph{et al.}
\cite{ghani10induction}:
\begin{thm}\label{thm:initial-fhat-algebra}
  $\top(\mu F)$ is the carrier $\mu \hat{F}$ of the initial
  $\hat{F}$-algebra.
\end{thm}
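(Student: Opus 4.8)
The plan is to exhibit $\top(\mu F)$ as the carrier of the initial $\hat{F}$-algebra by transporting the initial $F$-algebra along a lifted truth functor $\truthalg{}$, exploiting the fact that $\truthalg{}$ is a left adjoint and so preserves initial objects.

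First I would assemble the hypotheses of \thmref{thm:alg-adjunctions}. Since $\hat{F}$ is a truth-preserving lifting we have a natural isomorphism $\top \circ F \cong \hat{F} \circ \top$, which is exactly the compatibility condition of \thmref{thm:alg-adjunctions} once we instantiate its left and right adjoints as $\top$ and $\{-\}$, its endofunctor on $\Fam(\Set)$ as $\hat{F}$, and its endofunctor on $\Set$ as the base functor $F$. The adjunction $\top \dashv \{-\}$ needed here is the right-hand adjunction of \hyperref[ex:cc-with-unit]{Diagram \ref*{ex:cc-with-unit}}. Applying the theorem then lifts $\top \dashv \{-\}$ to the adjunction $\truthalg{} \dashv \compralg{-}$ on categories of algebras recorded just above the statement.

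Next I would invoke the fact that a left adjoint preserves colimits, and in particular initial objects. The initial object of $\Alg_F$ is the initial $F$-algebra $\inn_F : F(\mu F) \to \mu F$, so $\truthalg{}(\inn_F)$ is the initial object of $\Alg_{\hat{F}}$; thus $\hat{F}$ has an initial algebra, with $\inn_{\hat{F}} = \truthalg{}(\inn_F)$. To identify its carrier I would use the explicit description of a lifted left adjoint from the remark following \thmref{thm:alg-adjunctions}: here it gives $\truthalg{}(k : FA \to A) = \top k \circ p_A$, where $p_A : \hat{F}(\top A) \to \top(FA)$ is one half of the truth-preservation isomorphism. This algebra has carrier $\top A$, so the carrier of $\truthalg{}(\inn_F)$ is $\top(\mu F)$, and therefore $\mu\hat{F} = \top(\mu F)$, as required.

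The genuinely hard analytic work --- truth-preservation of $\hat{F}$ and the Hermida--Jacobs lifting theorem --- has been front-loaded into the cited results, so what remains is essentially careful bookkeeping about adjunction directions. Because $\top$ is simultaneously the \emph{right} adjoint of $U$ and the \emph{left} adjoint of $\{-\}$, the key point to get right is that it is the adjunction $\top \dashv \{-\}$ we lift, so that $\truthalg{}$ sits on the left of $\truthalg{} \dashv \compralg{-}$ and hence preserves initiality. I expect this matching, together with checking via the explicit formula that $\truthalg{}(\inn_F)$ really has carrier $\top(\mu F)$ on the nose rather than merely up to isomorphism, to be the main (if modest) obstacle.
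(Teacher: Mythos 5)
Your proposal is correct and follows essentially the same route as the paper: the paper establishes the result by lifting the adjunctions of \hyperref[ex:cc-with-unit]{Diagram \ref*{ex:cc-with-unit}} to $\palg \dashv \truthalg{} \dashv \compralg{-}$ via \thmref{thm:alg-adjunctions} (using truth-preservation of $\hat{F}$) and then observing that the left adjoint $\truthalg{}$ preserves initial objects. You correctly pick out the adjunction $\top \dashv \{-\}$ as the one to lift and use the explicit formula $L'(k) = Lk \circ p_A$ to identify the carrier as $\top(\mu F)$, which is exactly the argument the paper sketches.
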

\noindent
\thmref{thm:initial-fhat-algebra} can be generalised to any full
cartesian Lawvere category.  As shown by Hermida and Jacobs, and by
Ghani \emph{et al.}, it can be used to give a generic structural
induction rule for any functor $F$ having an initial algebra.

\section{From Liftings to Refinements}\label{sec:refining-inductive}

In this section we show that the refinement of an
inductive type $\mu F$ by an $F$-algebra $\alpha : FA \to A$ --- i.e.,
the family
\begin{equation}\label{eqn:muFalpha}
  (A, \lambda a:A.\ \{ x : \mu F \sepbar \fold{\alpha} x = a \})
\end{equation}
generalising the refinement in \parenref{eqn:vector} --- is
inductively characterised as $\mu F^\alpha$, where $F^\alpha :
\Fam(\Set)_A \to \Fam(\Set)_A$ is given by
\begin{equation}
  \label{eqn:Falpha-set}
  F^\alpha (A,P) = (A,\lambda a.\ \{ x : F \{(A,P)\} \sepbar
  \alpha(F\pi_{(A,P)} x) = a \}) 
\end{equation}
\noindent
That is, $F^{\alpha}(A,P)$ is obtained by first building the
$FA$-indexed type $\hat{F}(A,P)$ from~\hyperref[eqn:fhat]{Equation
  \ref*{eqn:fhat}}, and then restricting membership to those elements
whose $\alpha$-values are correctly computed from those of their
immediate subterms. More generally, we can express $F^\alpha$ in terms
of the constructions of \hyperref[sec:families]{Section
  \ref*{sec:families}} as 
\begin{equation}
  \label{eqn:Falpha}
  F^\alpha = \Sigma_\alpha \circ \hat{F}_A
\end{equation}

Before we prove that the above construction of $F^\alpha$ is correct,
we show that it yields the refinement of lists by the length function
given in \parenref{eqn:vector}.

\begin{eorollary}\label{ex:lists-n-vectors}
  The inductive type of lists of elements with type $B$ can be
  specified by the functor $F_{\tyname{List}_B}X = 1 + B \times
  X$. Writing $\mathsf{Nil}$ for the left injection and
  $\mathsf{Cons}$ for the right injection into the coproduct
  $F_{\tyname{List}_B}X$, the $F_{\tyname{List}_B}$-algebra
  $\mathit{lengthalg} : F_{\tyname{List}_B}\mathbb{N} \to \mathbb{N}$
  that computes the lengths of lists is
\begin{displaymath}
  \begin{array}{lll}
    \mathit{lengthalg}\ \mathsf{Nil} & = & 0 \\
    \mathit{lengthalg}\ (\mathsf{Cons}(b,n)) & = & n + 1
  \end{array}
\end{displaymath}
In the families fibration, we can calculate the refinement of $\mu
F_{\tyname{List}_B}$ by the algebra $\mathit{lengthalg}$ as follows:
\begin{eqnarray*}
  & & F_{\tyname{List}_B}^{\mathit{lengthalg}}(\mathbb{N}, P) \\
  &=& (\mathbb{N}, \lambda n. \{ x :
  F_{\tyname{List}_B}\{(\mathbb{N},P)\} \sepbar
  \mathit{lengthalg}(F_{\tyname{List}_B}\pi_{(A,P)}x) = n \})\\ 
  &=& (\mathbb{N}, \lambda n.
  \begin{array}[t]{l}
    \{ x : 1 \sepbar \mathit{lengthalg}(\mathsf{Nil}) = n \} \\
    +\\
    \{ x : B \times \{(\mathbb{N}, P)\} \sepbar
    \mathit{lengthalg}(\mathsf{Cons}((B \times
    \pi_{(\mathbb{N},P)})x)) = n \})
  \end{array}
\end{eqnarray*}
The first equality holds by \parenref{eqn:Falpha} and the expansion of
this expression in the families fibration. The second is obtained by
unfolding the definition of $F_{\tyname{List}}$ as a coproduct, which
allows the refinement to be presented as a coproduct as well. In the
first summand of the final expression above,
$\mathit{lengthalg}(\mathsf{Nil}) = 0$, so that $\{ x : 1 \sepbar
\mathit{lengthalg}(\mathsf{Nil}) = n \}$ reduces to $\{ * \sepbar 0 =
n \}$. We can expand the product and comprehension parts of $x$ in the
second summand to see that $\{ x : B \times \{(\mathbb{N}, P)\}
\sepbar \mathit{lengthalg}(\mathsf{Cons}((B \times
\pi_{(\mathbb{N},P)})x)) = n \}$ reduces to $\{ b : B, n_1 :
\mathbb{N}, l : Pn_1 \sepbar \mathit{lengthalg}(\mathsf{Cons}(b,n_1))
= n \}$. Since $\mathit{lengthalg}(\mathsf{Cons}(b,n_1)) = n_1 + 1$,
the whole refinement can therefore be expressed as
\begin{displaymath}
  F_{\tyname{List}_B}^{\mathit{lengthalg}}(\mathbb{N},P) = (\mathbb{N}, \lambda n. \{
  * \sepbar 0 = n \} + \{ b : B, n_1 : \mathbb{N}, l : Pn_1 \sepbar
  n_1 + 1 = n \}) 
\end{displaymath}
As we shall see in \thmref{thm:muFalpha-initial} below, the least
fixed point $\mu F_{\tyname{List}_B}^{\mathit{lengthalg}}$ of this
functor exists and is $(\mathbb{N}, \lambda n.\ \{ x : \mu
F_{\tyname{List}_B} \sepbar \fold{\mathit{lengthalg}}x = n \})$,
exactly as required. Moreover, the expression for
$F_{\tyname{List}_B}^{\mathit{lengthalg}}$ derived here is exactly the
same as the definition of the functor $F_{\tyname{Vector}_B}$ given in
\autoref{sec:indexed-ind-types} whose least fixed point models the
Agda 2 declaration of \texttt{Vector B} given in the
\hyperref[sec:introduction]{introduction}.  The derivation just
completed therefore justifies this definition of \verb|Vector B|.
\end{eorollary}

\subsection{Correctness of Refinement}

We now turn our attention to proving the correctness of our refinement
construction from \parenref{eqn:Falpha-set}. The proof makes good use
of the relationship between the category $\Fam(\Set)$ and the
categories $\Fam(\Set)_A$ for various sets $A$, as well as of the
lifting of this relationship to the categories $\Alg_{\hat{F}}$ and
$\Alg_{F^\alpha}$ of algebras.

We begin with a simple, but key, observation, namely:

\begin{lem}\label{lem:hidden-cartesian}
  Let $(A,P)$ and $(B,Q)$ be objects in $\Fam(\Set)$, and let $f : A
  \to B$ be a function. The set of morphisms $h$ in $\Fam(\Set)$ from
  $(A,P)$ to $(B,Q)$ such that $Uh = f$ is isomorphic to the set of
  morphisms in $\Fam(\Set)_A$ from $(A,P)$ to $f^*(B,Q)$.
\end{lem}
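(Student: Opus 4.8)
The plan is to prove the isomorphism by unfolding the definition of morphisms in $\Fam(\Set)$ and observing that both sides of the claimed isomorphism are described by literally the same data. Recall that a morphism $h : (A,P) \to (B,Q)$ in $\Fam(\Set)$ is a pair $(h_0, h^{\sim})$ with $h_0 : A \to B$ and $h^{\sim} : \forall a.\ Pa \to Q(h_0 a)$, and that $Uh = h_0$. Imposing the constraint $Uh = f$ therefore forces $h_0 = f$, so that such an $h$ is precisely a pair $(f, h^{\sim})$ whose only free datum is $h^{\sim} : \forall a.\ Pa \to Q(fa)$. On the other side, a vertical morphism $g : (A,P) \to f^*(B,Q) = (A, Q \circ f)$ in $\Fam(\Set)_A$ is a pair $(\mathit{id}_A, g^{\sim})$ with $g^{\sim} : \forall a.\ Pa \to (Q \circ f)(a)$; since $(Q \circ f)(a) = Q(fa)$, the free datum $g^{\sim}$ again has exactly the type $\forall a.\ Pa \to Q(fa)$.

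First I would define the map $\Phi$ sending $h = (f, h^{\sim})$ to the vertical morphism $(\mathit{id}_A, h^{\sim})$, and the map $\Psi$ sending a vertical morphism $(\mathit{id}_A, g^{\sim})$ to $(f, g^{\sim})$. I would then verify that $\Phi(h)$ is a well-typed vertical morphism $(A,P) \to f^*(B,Q)$ (its base component is $\mathit{id}_A$ and its fibre component has the right type by the computation $(Q \circ f)(a) = Q(fa)$ above), that $\Psi(g)$ is a well-typed morphism $(A,P) \to (B,Q)$ satisfying $U(\Psi(g)) = f$, and finally that $\Phi$ and $\Psi$ are mutually inverse.

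The last point is immediate: each map merely exchanges the fixed base component ($f$ versus $\mathit{id}_A$) while transporting the fibre component $h^{\sim} = g^{\sim}$ unchanged, so both $\Psi \circ \Phi$ and $\Phi \circ \Psi$ are identities.

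I expect no real obstacle here; the entire content of the lemma is the bookkeeping observation that the equation $Uh = f$ pins the base component down to $f$, leaving exactly the fibre data $\forall a.\ Pa \to Q(fa)$ that also describes a vertical map into the reindexing $f^*(B,Q)$. Conceptually this is the universal property of the cartesian lifting of $f$ to $(B,Q)$ instantiated in the families fibration --- which is why it is natural to call it ``hidden cartesian'' --- and in an arbitrary fibration it would follow directly from the definition of a cartesian morphism rather than from the explicit formula for reindexing. The only steps requiring a line of checking are that the two assignments really land in the claimed hom-sets, and these are settled by the type computation $(Q \circ f)(a) = Q(fa)$ noted above.
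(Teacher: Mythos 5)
Your proof is correct and follows essentially the same route as the paper's: both unfold the definitions of morphisms in $\Fam(\Set)$ and of $f^*(B,Q) = (A, Q \circ f)$ to observe that each side is determined by exactly the data $h^{\sim} : \forall a.\ Pa \to Q(fa)$, giving an evident bijection. Your additional remark relating this to cartesian liftings matches the spirit of the lemma's name and the paper's fibrational generalisations, but it is a gloss rather than a different argument.
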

\begin{proof}
  This follows directly from the definitions. On the one hand, a
  morphism $h$ in $\Fam(\Set)$ from $(A,P)$ to $(B,Q)$ such that $Uh =
  f$ is a pair $(f,h^\sim)$, where $h^\sim : \forall a. Pa \to
  Q(fa)$. On the other, the definition of the re-indexing functor
  $f^*$, i.e.  $f^*(B,Q) = (A, Q \circ f)$, entails that a morphism in
  $\Fam(\Set)_A$ from $(A,P)$ to $f^*(B,Q)$ is a pair $(\mathit{id},
  h^\sim)$, where $h^\sim : \forall a. Pa \to Q(fa)$. There is clearly
  an isomorphism between these sets of morphisms.
\end{proof}

To understand the relationship between the category of
$\hat{F}$-algebras and the category of $F^\alpha$-algebras, it is
convenient to define category of $\hat{F}$-algebras that are over the
$F$-algebra $\alpha$ with respect to the fibration $\palg$ defined at
the end of \autoref{sec:fibrational-ind}.

\begin{defi}
  For each $F$-algebra $\alpha : FA \to A$, the category
  $(\Alg_{\hat{F}})_\alpha$ of {\em $\hat{F}$-algebras over $\alpha$}
  with respect to $\palg$ has as objects $\hat{F}$-algebras $k :
  \hat{F}P \to P$ such that $Uk = \alpha$, and as morphisms
  $\hat{F}$-algebra morphisms $f : (k_1 : \hat{F}P \to P) \to (k_2 :
  \hat{F}Q \to Q)$ such that $Uf = \mathit{id}$.
\end{defi}

\begin{lem}\label{lem:alg-iso}
  For each $F$-algebra $\alpha : FA \to A$, there is an isomorphism of
  categories $(\Alg_{\hat{F}})_\alpha \cong \Alg_{F^\alpha}$.
\end{lem}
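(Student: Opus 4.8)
The plan is to observe that objects of both categories are, at bottom, the same thing --- a family $P$ in $\Fam(\Set)_A$ equipped with an algebra map --- and that the morphisms of both categories have the same underlying vertical morphisms in $\Fam(\Set)_A$; only the way the algebra map is packaged differs between the two presentations. I would therefore construct a functor $\Phi : (\Alg_{\hat{F}})_\alpha \to \Alg_{F^\alpha}$ that is the identity on the underlying family and on the underlying vertical morphism and merely transports the structure map, and then exhibit an inverse by running the same transport backwards. Because $\Phi$ acts as the identity on underlying data, preservation of identities and composition is automatic, so the only real content is to show that the transport of structure maps is a bijection respecting the two algebra-homomorphism conditions.

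For the object part, note that an object of $(\Alg_{\hat{F}})_\alpha$ is a map $k : \hat{F}P \to P$ in $\Fam(\Set)$ with $Uk = \alpha$, where necessarily $P \in \Fam(\Set)_A$ and $\hat{F}P = \hat{F}_A P \in \Fam(\Set)_{FA}$. Applying \lemref{lem:hidden-cartesian} (with base map $\alpha : FA \to A$) identifies such $k$ with vertical morphisms $\bar{k} : \hat{F}_A P \to \alpha^* P$ in $\Fam(\Set)_{FA}$; concretely $\bar{k}$ has the same fibrewise action as $k$, only reindexed to sit over the identity. Transposing $\bar{k}$ across the adjunction $\Sigma_\alpha \dashv \alpha^*$ then yields a vertical morphism $\tilde{k} : \Sigma_\alpha \hat{F}_A P \to P$ in $\Fam(\Set)_A$. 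Since $F^\alpha = \Sigma_\alpha \circ \hat{F}_A$ by \parenref{eqn:Falpha}, this $\tilde{k}$ is precisely an $F^\alpha$-algebra structure on $P$, and the composite of the two bijections gives the required bijection on objects.

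The main work, and the step I expect to require the most care, is to check that this transport carries $\hat{F}$-algebra morphisms over $\alpha$ to $F^\alpha$-algebra morphisms and back. A morphism in $(\Alg_{\hat{F}})_\alpha$ from $k_1$ to $k_2$ is a vertical $f : P \to Q$ with $f \circ k_1 = k_2 \circ \hat{F}f$; since $f$ is over $\mathit{id}_A$, the map $\hat{F}f$ is over $F\,\mathit{id}_A = \mathit{id}_{FA}$ and hence coincides with $\hat{F}_A f$. I would then use naturality of the isomorphism of \lemref{lem:hidden-cartesian} in both its source and target to rewrite this equation as $(\alpha^* f) \circ \bar{k_1} = \bar{k_2} \circ \hat{F}_A f$ in $\Fam(\Set)_{FA}$, and naturality of the adjoint transpose for $\Sigma_\alpha \dashv \alpha^*$ to rewrite that, in turn, as $f \circ \tilde{k_1} = \tilde{k_2} \circ \Sigma_\alpha \hat{F}_A f = \tilde{k_2} \circ F^\alpha f$ in $\Fam(\Set)_A$ --- exactly the $F^\alpha$-algebra morphism condition. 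As every step in this chain is an equivalence, $f$ satisfies the one homomorphism law if and only if it satisfies the other, which simultaneously gives fullness and faithfulness of $\Phi$ on each hom-set. Combined with the object bijection and the fact that $\Phi$ fixes the underlying data, this shows $\Phi$ is an isomorphism of categories.
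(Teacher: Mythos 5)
Your proposal is correct and follows essentially the same route as the paper's proof: the object bijection is obtained exactly as in the paper, by composing the correspondence of \lemref{lem:hidden-cartesian} with the adjoint transpose for $\Sigma_\alpha \dashv \alpha^*$ and invoking the definition $F^\alpha = \Sigma_\alpha \circ \hat{F}_A$. The paper merely states that the morphism case ``is similar''; your naturality argument is precisely the detail it leaves implicit.
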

\begin{proof}
  We demonstrate only the isomorphism on objects here; the isomorphism
  on morphisms is similar. An object of $(\Alg_{\hat{F}})_\alpha$ is a
  pair comprising a family $(A,P)$ and a morphism $k : \hat{F}(A,P)
  \to (A,P)$ in $\Fam(\Set)$ such that $Uk = \alpha$. By
  \lemref{lem:hidden-cartesian}, such morphisms $k$ are in one-to-one
  correspondence with the morphisms $k' : \hat{F}(A,P) \to
  \alpha^*(A,P)$ in $\Fam(\Set)_{FA}$. By the adjunction
  $\Sigma_\alpha \dashv \alpha^*$, the latter morphisms are in
  one-to-one correspondence with the morphisms $k'' : \Sigma_\alpha
  \hat{F}(A,P) \to (A,P)$ in $\Fam(\Set)_A$. By the definition of
  $F^\alpha$, these morphisms are exactly the $F^\alpha$-algebras,
  i.e., the objects of $\Alg_{F^\alpha}$.
\end{proof}

The next lemma shows that the reindexing and op-reindexing functors
for $\palg : \Alg_{\hat{F}} \to \Alg_F$ are inherited from $U :
\Fam(\Set) \to \Set$. We have:

\begin{lem}\label{lem:alg-bifibration}
  For every $F$-algebra morphism $f : (\alpha : FA \to A) \to (\beta :
  FB \to B)$, there are functors $f^{\reindAlg} :
  (\Alg_{\hat{F}})_\beta \to (\Alg_{\hat{F}})_\alpha$ and
  $\Sigma^{\mathsf{Alg}}_f : (\Alg_{\hat{F}})_\alpha \to
  (\Alg_{\hat{F}})_\beta$ such that $\Sigma^{\mathsf{Alg}}_f \dashv
  f^{\reindAlg}$. Moreover, for any $\hat{F}$-algebra $k :
  \hat{F}(A,P) \to (A,P)$, the $\hat{F}$-algebra
  $\Sigma^{\mathsf{Alg}}_f(k : \hat{F}(A,P) \to (A,P))$ has carrier
  $\Sigma_f(A,P)$, and for any $\hat{F}$-algebra $k' : \hat{F}(B,Q)
  \to (B,Q)$, the $\hat{F}$-algebra $f^{\reindAlg}(k' : \hat{F}(B,Q)
  \to (B,Q))$ has carrier $f^*(B,Q)$.
\end{lem}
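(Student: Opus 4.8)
The plan is to prove Lemma~\ref{lem:alg-bifibration} by transporting the bifibration structure that the families fibration $U : \Fam(\Set) \to \Set$ already has — namely reindexing $f^*$ and op-reindexing $\Sigma_f$ — up to the algebra level, and checking that these operations respect the relevant algebra structure maps. The key enabling facts are \lemref{lem:lifting-commute-reindex} and \lemref{lem:lifting-commute-sigma}, which say precisely that the lifting $\hat{F}$ commutes (up to vertical natural isomorphism) with reindexing and with op-reindexing. These isomorphisms are exactly what is needed to turn an $\hat{F}$-algebra over $\beta$ into one over $\alpha$, and vice versa, while staying inside the correct fibres.

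First I would define the two functors on objects. Given an $\hat{F}$-algebra $k' : \hat{F}(B,Q) \to (B,Q)$ with $Uk' = \beta$, I set the carrier of $f^{\reindAlg}(k')$ to be $f^*(B,Q)$, and I build its structure map from $k'$ by precomposing with the reindexing isomorphism. Concretely, $\hat{F}(f^*(B,Q)) \cong (Ff)^*\hat{F}(B,Q)$ by \lemref{lem:lifting-commute-reindex}, and applying $(Ff)^*$ to $k'$ gives a map $(Ff)^*\hat{F}(B,Q) \to (Ff)^*(B,Q)$; since $f : \alpha \to \beta$ is an $F$-algebra morphism we have $\beta \circ Ff = f \circ \alpha$, which lets me identify $(Ff)^*(B,Q)$ with the carrier over $A$ appropriately and produce a map landing in $f^*(B,Q)$. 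The resulting composite is an $\hat{F}$-algebra whose image under $U$ is $\alpha$, so it lies in $(\Alg_{\hat{F}})_\alpha$. Dually, for $\Sigma^{\mathsf{Alg}}_f$, given $k : \hat{F}(A,P) \to (A,P)$ over $\alpha$, I take carrier $\Sigma_f(A,P)$ and use \lemref{lem:lifting-commute-sigma}, $\hat{F}(\Sigma_f(A,P)) \cong \Sigma_{Ff}\hat{F}(A,P)$, together with the functoriality of $\Sigma_{(-)}$ along the $F$-algebra morphism equation, to manufacture the structure map. I would then check functoriality of both assignments on morphisms, which is routine since morphisms in these fibre categories are vertical and the isomorphisms are natural.

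The main work is establishing the adjunction $\Sigma^{\mathsf{Alg}}_f \dashv f^{\reindAlg}$. My preferred route is not to verify the triangle identities by hand but to appeal to \lemref{lem:alg-iso} and \thmref{thm:alg-adjunctions}. By \lemref{lem:alg-iso}, the fibre categories $(\Alg_{\hat{F}})_\alpha$ and $(\Alg_{\hat{F}})_\beta$ are isomorphic to $\Alg_{F^\alpha}$ and $\Alg_{F^\beta}$ respectively. The underlying adjunction $\Sigma_f \dashv f^*$ on the fibres of the families fibration, combined with the commutation isomorphisms of the two lifting lemmas, supplies a natural isomorphism $F^\beta \circ \Sigma_f \cong \Sigma_f \circ F^\alpha$ (intertwining the two refinement functors along the base adjunction). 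Feeding $\Sigma_f \dashv f^*$ and this intertwining isomorphism into \thmref{thm:alg-adjunctions} lifts the adjunction to the categories of $F^\alpha$- and $F^\beta$-algebras, and transporting back across the isomorphisms of \lemref{lem:alg-iso} yields the desired $\Sigma^{\mathsf{Alg}}_f \dashv f^{\reindAlg}$ with exactly the carriers claimed.

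The hard part will be bookkeeping the several coherence isomorphisms so that they compose correctly: I must verify that the natural isomorphism $F^\beta \circ \Sigma_f \cong \Sigma_f \circ F^\alpha$ really does arise by pasting the isomorphism of \lemref{lem:lifting-commute-sigma} with the adjunction-induced mates, and in particular that the $F$-algebra morphism condition $f \circ \alpha = \beta \circ Ff$ is used in precisely the place that makes the $\Sigma_\alpha$ and $\Sigma_\beta$ prefixes in the definition of $F^\alpha$, $F^\beta$ (via \parenref{eqn:Falpha}) line up. Once that single intertwining isomorphism is pinned down, the rest follows formally from \thmref{thm:alg-adjunctions} and \lemref{lem:alg-iso}, so I expect the conceptual content to be concentrated entirely in constructing that isomorphism and the routine verification that $U$ applied to the lifted functors recovers $f^*$ and $\Sigma_f$ on carriers.
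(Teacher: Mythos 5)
Your proposal is correct and follows essentially the same route as the paper: reduce to $\Alg_{F^\alpha}$ and $\Alg_{F^\beta}$ via \lemref{lem:alg-iso}, establish $F^\beta \circ \Sigma_f \cong \Sigma_f \circ F^\alpha$ by combining the $F$-algebra morphism equation (giving $\Sigma_f \circ \Sigma_\alpha \cong \Sigma_\beta \circ \Sigma_{Ff}$) with \lemref{lem:lifting-commute-sigma}, and then invoke \thmref{thm:alg-adjunctions}. One minor remark: your explicit description of $f^{\reindAlg}$ via \lemref{lem:lifting-commute-reindex} would import a pullback-preservation hypothesis not present in the lemma statement, but this is harmless since the right adjoint is already supplied by \thmref{thm:alg-adjunctions} without it.
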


\begin{proof}
  By \lemref{lem:alg-iso}, we can treat $(\Alg_{\hat{F}})_\alpha$ as
  if it were $\Alg_{F^\alpha}$, and $(\Alg_{\hat{F}})_\beta$ as if it
  were $\Alg_{F^\beta}$. In \autoref{sec:families}, we noted that for
  any $f : A \to B$, there are functors $f^* : \Fam(\Set)_B \to
  \Fam(\Set)_A$ and $\Sigma_f : \Fam(\Set)_A \to \Fam(\Set)_B$ such
  that $\Sigma_f \dashv f^*$. The lemma statement is now a consequence
  of \thmref{thm:alg-adjunctions}, provided we can establish the isomorphism
  $F^\beta \circ \Sigma_f \cong \Sigma_f \circ F^\alpha$. But we can
  verify the existence of such an isomorphism as follows: 
  \begin{center}
    \begin{tabular}{cll}
      & $\Sigma_f \circ F^\alpha$ & \\
      $=$     & $\Sigma_f \circ \Sigma_\alpha \circ \hat{F}_A$ & by the
      definition of $F^\alpha$ \\
      $\cong$ & $\Sigma_\beta \circ \Sigma_{Ff} \circ \hat{F}_A$ & since $f$ is an $F$-algebra morphism \\
      $\cong$ & $\Sigma_\beta \circ \hat{F}_B \circ \Sigma_f$ & by \lemref{lem:lifting-commute-sigma} \\
      $=$     & $F^\beta \circ \Sigma_f$ & by the definition of $F^\beta$
    \end{tabular}
  \end{center}
  This is exactly as required.
\end{proof}

We can now see that \lemref{lem:hidden-cartesian} generalises from the
categories in the families fibration to those in $\palg$. This gives:

\begin{lem}\label{lem:hidden-cartesian-alg}
  Let let $k_1 : \hat{F}(A,P) \to (A,P)$ and $k_2 : \hat{F}(B,Q) \to
  (B,Q)$ be objects of $(\Alg_{\hat{F}})_\alpha$ and
  $(\Alg_{\hat{F}})_\beta$, respectively, and let $f : (\alpha : FA
  \to A) \to (\beta : FB \to B)$ be an $F$-algebra morphism. The set
  of morphisms $h$ in $\Alg_{\hat{F}}$ from $k_1 : \hat{F}(A,P) \to
  (A,P)$ to $k_2 : \hat{F}(B,Q) \to (B,Q)$ such that $\palg h = f$ is
  isomorphic to the set of morphisms in $(\Alg_{\hat{F}})_\alpha$ from
  $k_1 : \hat{F}(A,P) \to (A,P)$ to $f^{\reindAlg}(k_2 : \hat{F}(B,Q)
  \to (B,Q))$.
\end{lem}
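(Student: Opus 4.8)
The plan is to reduce the statement to \lemref{lem:hidden-cartesian} applied to the underlying families, and then to check that the resulting bijection restricts to one between $\hat{F}$-algebra morphisms. First I would spell out the two sets being compared. A morphism $h$ in $\Alg_{\hat{F}}$ from $k_1$ to $k_2$ with $\palg h = f$ is a $\Fam(\Set)$-morphism $h : (A,P) \to (B,Q)$ satisfying $Uh = f$ together with the algebra condition $h \circ k_1 = k_2 \circ \hat{F}h$. A morphism in $(\Alg_{\hat{F}})_\alpha$ from $k_1$ to $f^{\reindAlg}(k_2)$ is a vertical $\Fam(\Set)$-morphism $g : (A,P) \to f^*(B,Q)$, so that $Ug = \mathit{id}$, satisfying the algebra condition $g \circ k_1 = f^{\reindAlg}(k_2) \circ \hat{F}g$; here, by \lemref{lem:alg-bifibration}, the algebra $f^{\reindAlg}(k_2)$ has carrier $f^*(B,Q)$ and lies over $\alpha$.

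By \lemref{lem:hidden-cartesian}, the underlying morphisms $h$ with $Uh = f$ are already in bijection with the vertical morphisms $g : (A,P) \to f^*(B,Q)$, both being carried by the same family $\forall a.\ Pa \to Q(fa)$. The crucial point is that this bijection is realised by composition with the canonical morphism $\bar{f} = (f, \lambda a.\ \mathit{id}) : f^*(B,Q) \to (B,Q)$, i.e.\ $h = \bar{f} \circ g$. The extra ingredient supplied by the fibration structure on $\palg$ (\lemref{lem:alg-bifibration}) is that $\bar{f}$ promotes to an $\hat{F}$-algebra morphism from $f^{\reindAlg}(k_2)$ to $k_2$; this is exactly what exhibits $f^{\reindAlg}(k_2)$ as the reindexing of $k_2$ along $f$. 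In particular $\bar{f} \circ f^{\reindAlg}(k_2) = k_2 \circ \hat{F}\bar{f}$.

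It then remains to show that, under $h = \bar{f} \circ g$, the algebra condition for $h$ holds if and only if that for $g$ does. One direction is immediate: if $g$ is an $\hat{F}$-algebra morphism then so is its composite $h = \bar{f} \circ g$ with the algebra morphism $\bar f$. For the converse I would rewrite $k_2 \circ \hat{F}h = k_2 \circ \hat{F}\bar{f} \circ \hat{F}g = \bar{f} \circ f^{\reindAlg}(k_2) \circ \hat{F}g$, using functoriality of $\hat{F}$ and the identity just noted, so that the condition $h \circ k_1 = k_2 \circ \hat{F}h$ reads $\bar{f} \circ (g \circ k_1) = \bar{f} \circ (f^{\reindAlg}(k_2) \circ \hat{F}g)$. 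Both $g \circ k_1$ and $f^{\reindAlg}(k_2) \circ \hat{F}g$ are morphisms $\hat{F}(A,P) \to f^*(B,Q)$ with the same $U$-image $\alpha$ (since $Uk_1 = \alpha$, $U(f^{\reindAlg}(k_2)) = \alpha$, and $U\hat{F}g = F(Ug) = \mathit{id}$). The main obstacle is cancelling $\bar{f}$ on the left. This is settled by the fact that $\bar{f} = (f, \lambda a.\ \mathit{id})$ has identity fibre component, so post-composition with $\bar{f}$ leaves the fibre component of a morphism unchanged; hence two morphisms into $f^*(B,Q)$ sharing the same $U$-image that become equal after applying $\bar f$ must already be equal. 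This is precisely the injectivity underlying \lemref{lem:hidden-cartesian}, and it yields $g \circ k_1 = f^{\reindAlg}(k_2) \circ \hat{F}g$, completing the correspondence.
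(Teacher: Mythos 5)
Your proof is correct and follows essentially the same route as the paper's (which is only sketched there): both reduce the statement to \lemref{lem:hidden-cartesian} on underlying families and then use an explicit description of $f^{\reindAlg}(k_2)$ --- your identity $\bar{f} \circ f^{\reindAlg}(k_2) = k_2 \circ \hat{F}\bar{f}$ is equivalent to the formula $\phi \circ (Ff)^*k_2 \circ \hat{F}(f,\mathit{id})$ that the paper records --- to transfer the algebra-morphism condition across the bijection. Your write-up in fact supplies the verification (in particular the cancellation of $\bar{f}$ via its identity fibre components, i.e.\ cartesianness) that the paper dismisses as ``tedious but not difficult''.
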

\begin{proof}
  The proof is tedious but not difficult. The key point entails
  constructing from each $\hat{F}$-algebra morphism $h : (A,P) \to
  (B,Q)$ such that $\palg h = f$ another $\hat{F}$-algebra morphism
  $h'' : (A,P) \to f^*(B,Q)$ such that $\palg h' = id$. This is made
  easier by observing that the definition of $f^{\reindAlg} :
  (\Alg_{\hat{F}})_{\beta} \to (\Alg_{\hat{F}})_{\alpha}$ obtained by
  applying \thmref{thm:alg-adjunctions} in the proof of
  \lemref{lem:alg-bifibration} is equivalent to the functor which on
  input $k : \hat{F}(B,Q) \to (B,Q)$ returns $\phi \circ (Ff)^*k
  \circ \hat{F}(f,\mathit{id})$, where $\phi : (Ff)^*\beta^*(B,Q) \to
  \alpha^* f^* (B,Q)$ is the isomorphism derived from the fact that
  $f$ is an $F$-algebra morphism.
\end{proof}

Putting this all together, we can now give our explicit
characterisation of $\mu F^\alpha$.

\begin{thm}\label{thm:muFalpha-initial}
  The functor $F^\alpha$ has an initial algebra with carrier
  $\Sigma_{\fold\alpha} \top(\mu F)$, i.e., with carrier $(A, \lambda
  a:A.\ \{ x : \mu F \sepbar \fold{\alpha} x = a \})$.
\end{thm}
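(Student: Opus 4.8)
The plan is to compute the initial $F^\alpha$-algebra by transporting the initial $\hat{F}$-algebra along the op-reindexing functor associated with $\fold\alpha$. First I would observe that $\fold\alpha : \mu F \to A$ is, by definition, an $F$-algebra morphism from the initial algebra $\inn_F : F(\mu F) \to \mu F$ to $\alpha : FA \to A$. Applying \lemref{lem:alg-bifibration} with $f = \fold\alpha$ therefore yields an adjunction $\Sigma^{\mathsf{Alg}}_{\fold\alpha} \dashv (\fold\alpha)^{\reindAlg}$ between $(\Alg_{\hat{F}})_{\inn_F}$ and $(\Alg_{\hat{F}})_\alpha$, whose left adjoint $\Sigma^{\mathsf{Alg}}_{\fold\alpha}$ sends an $\hat{F}$-algebra with carrier $(\mu F, P)$ to one with carrier $\Sigma_{\fold\alpha}(\mu F, P)$.

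The crux of the argument is then to identify the initial object of the fibre $(\Alg_{\hat{F}})_{\inn_F}$. I claim it is the (globally) initial $\hat{F}$-algebra $\inn_{\hat{F}}$, which by \thmref{thm:initial-fhat-algebra} has carrier $\top(\mu F)$ and which lies over $\inn_F$ because $\palg$, being a left adjoint via $\palg \dashv \truthalg{}$, preserves initial objects. To see that global initiality implies fibrewise initiality here, I would take any object $k$ of $(\Alg_{\hat{F}})_{\inn_F}$ --- necessarily with carrier of the form $(\mu F, P)$ --- and consider the unique $\hat{F}$-algebra morphism $\fold{k} : \inn_{\hat{F}} \to k$ guaranteed by global initiality. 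Its image $\palg(\fold{k})$ is an $F$-algebra endomorphism of $\inn_F$; but $\inn_F$ is initial in $\Alg_F$, so this endomorphism must be $\mathit{id}$. Hence $\fold{k}$ is a vertical morphism, i.e., a morphism of $(\Alg_{\hat{F}})_{\inn_F}$, and it is the only such morphism into $k$ because any fibre morphism is in particular a global one. Thus $\inn_{\hat{F}}$ is initial in $(\Alg_{\hat{F}})_{\inn_F}$.

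Since left adjoints preserve initial objects, $\Sigma^{\mathsf{Alg}}_{\fold\alpha}$ carries $\inn_{\hat{F}}$ to the initial object of $(\Alg_{\hat{F}})_\alpha$, which by \lemref{lem:alg-iso} is precisely the initial $F^\alpha$-algebra. Its carrier is $\Sigma_{\fold\alpha}\top(\mu F)$. Finally I would unfold this expression using the definition of $\Sigma_f$ from \autoref{sec:sums}: $\Sigma_{\fold\alpha}\top(\mu F) = (A, \lambda a.\ \Sigma_{x \in \mu F}.\ (a = \fold\alpha\, x) \times 1)$, and the trivial factor $\times 1$ collapses this to $(A, \lambda a.\ \{ x : \mu F \sepbar \fold\alpha\, x = a \})$, exactly the family in \parenref{eqn:muFalpha}.

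I expect the main obstacle to be the fibrewise-initiality step: one must be careful that global initiality of $\inn_{\hat{F}}$ in $\Alg_{\hat{F}}$ does indeed descend to initiality in the non-full subcategory $(\Alg_{\hat{F}})_{\inn_F}$. The argument hinges entirely on $\inn_F$ being initial in $\Alg_F$, so that its only endomorphism is the identity; this is what forces the canonical morphism out of $\inn_{\hat{F}}$ to be vertical. Everything else is a matter of assembling the adjunctions supplied by \lemref{lem:alg-bifibration} and \lemref{lem:alg-iso} and performing the routine carrier computation.
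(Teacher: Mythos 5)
Your proposal is correct and follows the same overall architecture as the paper's proof: both transport the initial $\hat{F}$-algebra along $\Sigma^{\mathsf{Alg}}_{\fold{\alpha}}$, read off the carrier $\Sigma_{\fold{\alpha}}\top(\mu F)$ via \lemref{lem:alg-bifibration}, and pass to $\Alg_{F^\alpha}$ via \lemref{lem:alg-iso}. Where you genuinely diverge is in verifying initiality of the candidate. The paper computes the hom-set $(\Alg_{\hat{F}})_\alpha(\Sigma^{\mathsf{Alg}}_{\fold{\alpha}}(\inn_{\hat{F}}), k)$ directly, first applying the adjunction from \lemref{lem:alg-bifibration} and then \lemref{lem:hidden-cartesian-alg} to identify it with the set of global $\hat{F}$-algebra morphisms $h$ with $\palg h = \fold{\alpha}$, which is a singleton by initiality of $\inn_{\hat{F}}$. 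You instead prove that $\inn_{\hat{F}}$ is initial in the fibre $(\Alg_{\hat{F}})_{\inn_F}$ --- using initiality of $\inn_F$ in $\Alg_F$ to force the canonical morphism $\fold{k}$ to be vertical, and rightly flagging that descent of initiality to a non-full subcategory is the point needing care --- and then invoke preservation of initial objects by the left adjoint $\Sigma^{\mathsf{Alg}}_{\fold{\alpha}}$. Your route bypasses \lemref{lem:hidden-cartesian-alg} entirely and is somewhat more elementary at that step; the paper's hom-set calculation buys a single uniform computation that needs no separate verticality argument and reuses machinery (\lemref{lem:hidden-cartesian-alg}) it has already set up. Both arguments are sound, and your concluding unfolding of $\Sigma_{\fold{\alpha}}\top(\mu F)$ to the family in \parenref{eqn:muFalpha} matches the paper's.
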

\begin{proof}
  By \lemref{lem:alg-iso}, it suffices to show that the category
  $(\Alg_{\hat{F}})_\alpha$ has an initial object with carrier
  $\Sigma_{\fold{\alpha}} \top(\mu F)$. We construct an initial object
  in $(\Alg_{\hat{F}})_\alpha$ from the initial $\hat{F}$-algebra
  $\inn_{\hat{F}} : \hat{F}(\top (\mu F)) \to \top (\mu F)$ from
  \thmref{thm:initial-fhat-algebra}. Since $U^{\mathsf{Alg}}$ is a
  left adjoint, it preserves initial objects, so that
  $U^{\mathsf{Alg}}(\inn_{\hat{F}} : \hat{F}(\top (\mu F)) \to \top
  (\mu F))$ is the initial $F$-algebra $\inn_F : F(\mu F) \to \mu F$.
  We can apply $\Sigma^{\mathsf{Alg}}_{\fold{\alpha}}$ to the initial
  $\hat{F}$-algebra to get our candidate object
  $\Sigma^{\mathsf{Alg}}_{\fold\alpha}(\inn_{\hat{F}} : \hat{F}(\top
  (\mu F)) \to \top (\mu F))$. By \lemref{lem:alg-bifibration}, this
  candidate has carrier $\Sigma_{\fold\alpha}\top (\mu F)$, as
  required.

  To see that our candidate object is initial in
  $(\Alg_{\hat{F}})_\alpha$, let $k : \hat{F}(A,P) \to (A,P)$ be any
  object in $(\Alg_{\hat{F}})_\alpha$. Then

  \begin{tabular}{cl}
    &
    $(\Alg_{\hat{F}})_\alpha(\Sigma^{\mathsf{Alg}}_{\fold{\alpha}}(\inn_{\hat{F}}
    : \hat{F}(\top (\mu F)) \to \top (\mu F)), (k : \hat{F}(A,P) \to
    (A,P)))$ \\ 
    $\cong$& $(\Alg_{\hat{F}})_{\inn_{F}}((\inn_{\hat{F}} :
    \hat{F}(\top (\mu F)) \to 
    \top (\mu F)), \fold{\alpha}^{\reindAlg}(k : \hat{F}(A,P) \to
    (A,P)))$ \\ 
    & \hspace{11cm}by \lemref{lem:alg-bifibration} \\
    $\cong$& $\{ h : \Alg_{\hat{F}}((\inn_{\hat{F}} : \hat{F}(\top (\mu F)) \to
    \top (\mu F)), (k : \hat{F}(A,P) \to (A,P))) \sepbar \palg h =
    \fold\alpha\}$ \\ 
    & \hspace{11cm}by \lemref{lem:hidden-cartesian-alg} \\
  \end{tabular}

  \noindent
  Since $\inn_{\hat{F}} : \hat{F}(\top (\mu F)) \to \top (\mu F)$ is
  the initial ${\hat{F}}$-algebra and $\palg$ takes $\fold{k}$ to
  $\fold{\alpha}$, the final set in the above sequence has exactly one
  element. Thus there is exactly one morphism from
  $\Sigma^{\mathsf{Alg}}_{\fold{\alpha}}(\inn_{\hat{F}} : \hat{F}(\top
  (\mu F)) \to \top (\mu F))$ to $(k : \hat{F}(A,P) \to (A,P))$ in
  $(\Alg_{\hat{F}})_\alpha$, and so our candidate object is indeed
  initial in $(\Alg_{\hat{F}})_\alpha$.
\end{proof}

For readers familiar with fibred category theory, we briefly sketch
how our definitions and proofs may be generalised. We have been
careful to state the definition of $F^\alpha$ in terms of the abstract
structure we identified in \autoref{sec:families}. It can therefore be
generalised to any full cartesian Lawvere category with very strong
coproducts.  \hyperref[lem:alg-bifibration]{Lemmas
  \ref*{lem:alg-bifibration}} and \hyperref[lem:hidden-cartesian-alg]{
  \ref*{lem:hidden-cartesian-alg}}, as well as
\thmref{thm:muFalpha-initial}, can also be generalised. As was shown
by Hermida and Jacobs \cite{hermida98structural}, for any lifting
$\hat{F}$, the obvious generalisation of the functor $\palg :
\Alg_{\hat{F}} \to \Alg_F$ is a fibration. The generalisation of
\lemref{lem:alg-iso} is a result about the fibre categories of this
fibration, and the generalisation of \lemref{lem:alg-bifibration}
shows that it is a bifibration (i.e., that the re-indexing functors
have left adjoints). The generalisation of
\thmref{thm:muFalpha-initial} then follows from the Proposition 9.2.2
of Jacobs' book \cite{jacobs99book}, which relates initial objects in
the total category of a fibration with initial objects in the fibres.

\subsection{More Example Refinements}\label{sec:refining-examples}

The following explicit formulas can be used to compute refinements for
polynomial functors with respect to the families fibration:
\[\begin{array}{lll}
  Id^{\alpha}(A,P) &=& (A, \lambda a. \{ x :\{ (A,P)\} \sepbar \alpha
  (\pi_{(A,P)} x) = a \})\\
 &=& (A, \lambda a. \{ a' : A, p : Pa' \sepbar \alpha a' = a \}) \\ 
  K_B^{\alpha}(A,P) &=& (A, \lambda a. \{x : B \sepbar \alpha x = a \})\\
  (G + H)^\alpha(A, P)
  &=& (A, \lambda a.
  \begin{array}[t]{l}
    \!\!\{ x : G\ \{(A,P)\} \sepbar \alpha (\mathsf{inl} (G\pi_{(A,P)}
    x)) = a \} \\ 
    \!\! \mathrel+ \{x : H\ \{(A,P)\} \sepbar \alpha (\mathsf{inr} (H\pi_{(A,P)}
    x)) = a \})
  \end{array} \\
  &=& (A, \lambda a.\ G^{\alpha \circ \mathsf{inl}}Pa + H^{\alpha
    \circ \mathsf{inr}}Pa)\\ 
  (G \times H)^\alpha(A,P)
  &=& (A, \lambda a.\ \{
  \begin{array}[t]{l}
    x_1 : G\ \{(A,P)\},\ x_2 : H\ \{(A,P)\} \sepbar \\
    \quad \alpha (G\pi_{(A,P)} x_1, H\pi_{(A,P)} x_2) = a \})
  \end{array}
\end{array}\]
Refinements of the identity and constant functors are as expected.
Refinement splits coproducts of functors into two cases, 
specialising the refining algebra for each summand. It is not, however,
possible to decompose the refinement of a product of functors $G
\times H$ into refinements of $G$ and $H$, not even by algebras other
than $\alpha$. This is because $\alpha$ may need to relate multiple 
elements to the overall index. 

\begin{eorollary}\label{ex:refined-trees}
  We can refine $\mu F_{\tyname{Tree}}$ by the
  $F_{\tyname{Tree}}$-algebra $\mathit{sumAlg}$ given by
\begin{displaymath}
  \begin{array}{lcl}
    \mathit{sumAlg} & : & F_{\tyname{Tree}}\mathbb{Z} \to \mathbb{Z} \\
    \mathit{sumAlg}\ (\mathsf{Leaf}\ z) & = & z \\
    \mathit{sumAlg}\ (\mathsf{Node}\ (l,r)) & = & l + r
  \end{array}
\end{displaymath}
The fold of $\mathit{sumAlg}$ sums the values stored at the leaves of
a tree. It yields the refinement $\mu
F_{\tyname{Tree}}^{\mathit{sumAlg}}$ given by
\begin{eqnarray*}
  F_{\tyname{Tree}}^{\mathit{sumAlg}}(\mathbb{Z}, P) 
  = (\mathbb{Z}, \lambda n.\ \{z : \mathbb{Z} \sepbar z = n\} + \{ l, r
  : \mathbb{Z}, x_1 : P l, x_2 : P r \sepbar n = l + r\ \})
\end{eqnarray*}
By \thmref{thm:muFalpha-initial} and the definition of
$\Sigma_{\fold{\mathit{sumAlg}}}$ we have that the refinement $\mu
F_{\tyname{Tree}}^{\mathit{sumAlg}}$ is $\lambda n. \{ x : \mu
F_{\tyname{Tree}} \sepbar \fold{\mathit{sumAlg}}x = n \}$. This
refinement indexes the elements of $\mu F_{\tyname{Tree}}$ by the sums
of the values in their leaves. It corresponds to the Agda 2 declaration
\begin{verbatim}
data SumTree : Integer -> Set where
  SumLeaf : (z : Integer) -> SumTree z
  SumNode : (l r : Integer) -> SumTree l -> SumTree r -> SumTree (l + r)
\end{verbatim}
\end{eorollary}

\noindent
Note that in the second summand of
$F_{\tyname{Tree}}^{\mathit{sumAlg}}$ we have two recursive references
to $P$, each with a separate index, and that these indices are related
to the overall index $n$ as in the second case of
$\mathit{sumAlg}$. However, the basic refinement process developed in
this section cannot be used to require indices of subterms to be
related to one another in particular ways. For instance, it cannot
enforce the requirement that the two subtrees sum to the same value,
or that the tree satisfy some balance property. Indeed, if such
restrictions are imposed, then some elements of the underlying data
type may fail to be assigned an index. We show how to treat this via
partial assignment of indices in \autoref{sec:partial-refinement}.

\subsection{Limiting cases} 

The two limiting cases of refinement are deserving of attention.
Refining by the initial $F$-algebra $\inn_F : F(\mu F) \to \mu F$
gives a $\mu F$-indexed type inductively characterised as the least
fixed point of the functor $F^{\inn_F} = \Sigma_{\inn_F}
\hat{F}$. Since $\inn_F$ is an isomorphism, $\Sigma_{\inn_F}$ is as
well. Thus $F^{\inn_F} \cong \hat{F}$, so that $\mu F^{\inn_F} = \mu
\hat{F} = \top (\mu F)$. Taking, for each $x : \mu F$, the canonical
singleton set $1$ to be $\{x\}$, we can regard each element of $\mu F$
is its own index. By contrast, refinement by the final algebra $! : F1
\to 1$ gives a $1$-indexed type inductively characterised by
$F^!$. Since $F^! \cong F$, the inductive type $\mu F^!$ is actually
$\mu F$. Since $1$ is the canonical singleton set, all elements of
$\mu F$ have exactly the same index.  Refining by the initial
$F$-algebra thus has maximal discriminatory power, while refining by
the final $F$-algebra has no discriminatory power whatsoever.

\section{Starting with Already Indexed
  Types}\label{sec:indexed-refinement}  

The development in \autoref{sec:refining-inductive} assumes that the
type being refined is the initial algebra of an endofunctor $F$ on
$\Set$. This seems to preclude refining an inductive type that is
already indexed. But since we carefully identified the abstract
structure of $\Fam(\Set)$ needed to construct our refinements, our
results can be extended to {\em any} fibration having that
structure. We now show that, in particular, we can refine already
indexed types.

To this end, let $A$ be a set, and suppose we want to refine an
$A$-indexed type.  As we have seen, such types may be interpreted in
the category $\Fam(\Set)_A$. The carrier of an $F$-algebra $\alpha$
with respect to which we want to refine an already $A$-indexed type
will thus be an $A$-indexed set $B : A \to \Set$, and the resulting
refinement will be a type of the form $\forall a. B a \to \Set$, i.e.,
will be a family of sets that is doubly indexed by both $A$ and $B$.

Just as the categories of indexed sets comprise the category
$\Fam(\Set)$ in \autoref{sec:families}, the families indexed by
$A$-indexed sets comprise a category $\Fam(\Set)_A \times_{\Set}
\Fam(\Set)$. (Our notation is derived from the pullback construction
used to construct this category in the general setting; see below.)
Objects of $\Fam(\Set)_A \times_{\Set} \Fam(\Set)$ are pairs $(B, P)$,
where $B : A \to \Set$ and $P : \forall a. Ba \to \Set$, and morphisms
are pairs $(f,f^{\sim}) : (B,P) \to (C,Q)$, where $f : \forall a. B a
\to C a$ and $f^{\sim} : \forall a, b \in B a. P a b \to Q a (f a b)$.
And just as there is a functor $U : \Fam(\Set) \to \Set$ defined by
$U(A,P) = A$ on objects and $U(f,F^\sim) = f$ on morphisms, there is a
functor $U^A : \Fam(\Set)_A \times_{\Set} \Fam(\Set) \to \Fam(\Set)_A$
defined by $U^A(B,P) = B$ on objects and $U^A(f,f^{\sim}) = f$ on
morphisms. We may now recreate each of the structures we identified
for the families fibration in \autoref{sec:families} for the new
fibration given by $U^A$. We have:
\begin{iteMize}{$\bullet$}
\item {\em Fibres}: For each object $B$ of $\Fam(\Set)_A$, the fibre
  of $(\Fam(\Set)_A \times_{\Set} \Fam(\Set))$ over $B$ is the
  category $(\Fam(\Set)_A \times_{\Set} \Fam(\Set))_B$ consisting of
  objects of $\Fam(\Set)_A \times_{\Set} \Fam(\Set)$ whose first
  component is $B$, and morphisms $(f,f^{\sim})$, where $f =
  \mathit{id}$. By abuse of terminology, such morphisms are again said
  to be \emph{vertical}.
\item {\em Reindexing}: Given a morphism $f : B \to C$ in
  $\Fam(\Set)$, we can define the re-indexing functor $f^* :
  (\Fam(\Set)_A \times_{\Set} \Fam(\Set))_C \to (\Fam(\Set)_A
  \times_{\Set} \Fam(\Set))_B$ by composition, similarly to how
  reindexing is defined for the families fibration.
\item {\em Truth functor}: For each set $A$, we can define $\top^A :
  \Fam(\Set)_A \to \Fam(\Set)_A \times_{\Set} \Fam(\Set)$ by
  $\top^A(B) = (B, \lambda a\, b.\, 1)$. As in the families fibration,
  this mapping of objects to truth predicates extends to a functor,
  called the {\em truth functor} for $U^A$.
\item {\em Comprehension functor}: For each set $A$, we can define
  $\{-\}^A : \Fam(\Set)_A \times_{\Set} \Fam(\Set) \to \Fam(\Set)_A$
  by $\{(B,P)\}^A = \lambda a. \{ (b \in Ba, p \in P a b)\}$. As in
  the families fibration, this mapping of objects to their
  comprehensions extends to a functor, called the {\em comprehension
    functor} for $U^A$.
\item {\em Indexed coproducts}: For any morphism $f : B \to C$ in
  $\Fam(\Set)_A$, we can define $\Sigma_f : (\Fam(\Set)_A
  \times_{\Set} \Fam(\Set))_B \to (\Fam(\Set)_A \times_{\Set}
  \Fam(\Set))_C$ by
  \begin{displaymath}
    \Sigma_f (B,P) = (C,\, \lambda a\, c.\, \Sigma_{b \in Ba}.\, (c = f a
    b) \times P a b). 
  \end{displaymath}
\item {\em Indexed products}: For any morphism $f : B \to C$ in
  $\Fam(\Set)_A$, we can define $\Pi_f : (\Fam(\Set)_A \times_{\Set}
  \Fam(\Set))_B \to (\Fam(\Set)_A \times_{\Set} \Fam(\Set))_C$ by
  \begin{displaymath}
    \Pi_f (B,P) = (C,\, \lambda a\, c. \Pi_{b \in Ba}.\, (c = f a b) \to P a b)
  \end{displaymath}
\end{iteMize}
\noindent
Given these definitions, we can check by hand that they satisfy the
same relationships from \autoref{sec:families} that their counterparts
for the families fibration do. It is therefore possible to re-state
each of the definitions and results in
\hyperref[sec:fibrational-ind]{Sections \ref*{sec:fibrational-ind}}
and \hyperref[sec:refining-inductive]{ \ref*{sec:refining-inductive}}
for $U^A$, and, thereby, to derive refinements of already indexed
inductive types. The constructions that we carry out in the families
fibration in \hyperref[sec:partial-refinement]{Sections
  \ref*{sec:partial-refinement}} and \hyperref[sec:zygo-refine]{
  \ref*{sec:zygo-refine}} can similarly be carried out in $U^A$ as
well.

\vspace*{0.1in}

For readers familiar with fibred category theory, we now sketch how to
generalise the above construction to construct a suitable setting for
indexed refinement from any full cartesian Lawvere category with
products and very strong coproducts, provided these satisfy the
Beck-Chevalley condition for coproducts. For this we can use the
\emph{change-of-base} construction for generating new fibrations by
pullback~\cite{jacobs99book}. Indeed, if $A$ is an object of
$\mathcal E$, then the following pullback in $\mathrm{Cat}$, the large
category of categories and functors, constructs ${\mathcal E}_A
\times_{\mathcal B} {\mathcal E}$:
\begin{displaymath}
  \xymatrix{
    {{\mathcal E}_A \times_{\mathcal B} {\mathcal E}} \ar[r] \ar[d]_{U^A}
    \pullbackcorner
    &
    {\mathcal E} \ar[d]^U
    \\
    {{\mathcal E}_A} \ar[r]^{\{-\}}
    &
    {\mathcal B}
  }
\end{displaymath}
\noindent
Instantiating $\mathcal E$ to $\Fam(\Set)$ and $U$ to the families
fibration constructs $\Fam(\Set)_A \times_{\Set} \Fam(\Set)$ as
defined above, up to currying. Moreover, the following theorem shows
that all the structure we require for constructing refinements is
preserved by the change-of-base construction, and thus ensures that
the change-of-base construction can be iterated as often as desired.

\begin{thm}\label{thm:change-of-base}
  If $U$ is a full cartesian Lawvere category with products and with
  very strong coproducts satisfying the Beck-Chevalley condition for
  coproducts, then so is $U^A$.
\end{thm}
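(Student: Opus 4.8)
The plan is to \emph{transport} every ingredient of the full-cartesian-Lawvere structure on $U$ through the change-of-base square, exploiting the fact that the pullback of a fibration along an arbitrary functor is again a fibration whose fibrewise data is inherited from the original. Write $K = \{-\} : \mathcal E_A \to \mathcal B$ for the comprehension of $U$ restricted to the fibre $\mathcal E_A$ over $A$, so that $U^A$ is the pullback of $U$ along $K$. First I would record the standard consequences of change of base (see Jacobs~\cite{jacobs99book}): $U^A$ is a fibration; a morphism of $\mathcal E_A \times_{\mathcal B} \mathcal E$ is $U^A$-cartesian exactly when its image in $\mathcal E$ is $U$-cartesian; the fibre of $U^A$ over an object $B$ of $\mathcal E_A$ is isomorphic, via the second projection, to the fibre $\mathcal E_{\{B\}}$ of $U$; and under these fibre isomorphisms the reindexing functor $f^*$ of $U^A$ along $f : B \to C$ is precisely the reindexing functor $\{f\}^*$ of $U$ along $\{f\} : \{B\} \to \{C\}$.

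Given this identification, the bifibration and product structure transfer essentially for free. Since $U$ is a bifibration with products, every reindexing functor of $U$ --- in particular every $\{f\}^*$ --- has both a left adjoint $\Sigma_{\{f\}}$ and a right adjoint $\Pi_{\{f\}}$; transporting these along the fibre isomorphisms yields left and right adjoints to each $f^*$ of $U^A$, using only that adjunctions are stable under composition with isomorphisms of categories. The resulting functors are exactly the explicit $\Sigma_f$ and $\Pi_f$ for $U^A$ displayed earlier in this section. For the unit I would take the truth functor $\top^A B = (B, \top\{B\})$: the fibres of $U^A$ are isomorphic to fibres of $U$, hence have terminal objects, and these are preserved by reindexing because the $\{f\}^*$ preserve terminal objects; this gives $U^A \dashv \top^A$. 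For comprehension I would set $\{(B,E)\}^A = \Sigma_{\pi_B} E$, the $U$-coproduct of $E$ along the projection $\pi_B : \{B\} \to A$, equipped with the evident projection $\pi^A$, and then check $\top^A \dashv \{-\}^A$, recovering the adjoint chain of \parenref{ex:cc-with-unit} for $U^A$.

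The subtle conditions --- fullness of the new comprehension, very strong coproducts, and Beck-Chevalley --- are where the hypotheses on $U$ are genuinely consumed. Fullness of $\{-\}^A$, needed so that $U^A$ is a \emph{full} cartesian Lawvere category, follows from fullness of the comprehension of $U$ (precisely the property flagged for this proof in \autoref{sec:truth-compr}) together with the description of $\{-\}^A$ as a $U$-coproduct. Very strong coproducts for $U^A$ reduce, via that same description, to very strong coproducts for $U$ as expressed in \parenref{eqn:sums-compr}.

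The step I expect to be the main obstacle is Beck-Chevalley for $U^A$. For a pullback square in the new base $\mathcal E_A$, the $U^A$-level Beck-Chevalley condition unwinds, using $f^* = \{f\}^*$ and $\Sigma_f = \Sigma_{\{f\}}$, into the $U$-level Beck-Chevalley condition for the $K$-image of that square. Making this reduction legitimate requires the key lemma that $K = \{-\} : \mathcal E_A \to \mathcal B$ \emph{preserves pullbacks}, so that the image square is still a pullback in $\mathcal B$ and the Beck-Chevalley condition for $U$ applies to it. This lemma is the technical heart of the argument: preservation of pullbacks is not automatic for a comprehension functor, but it should follow from the full-comprehension structure together with very strong coproducts and Beck-Chevalley for $U$, which jointly force comprehension to preserve the pullbacks built from reindexing and coproducts. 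I would therefore establish this lemma first, after which the remaining verifications are routine diagram chases matching the explicit definitions already recorded for $U^A$ in this section.
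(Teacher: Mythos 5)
Your proposal follows essentially the same route as the paper's (sketched) proof: transport the structure of $U$ through the change-of-base pullback, with exactly the paper's definitions $\top^A P = (P, \top\{P\})$ and $\{(P,Y)\}^A = \Sigma_{\pi_P} Y$ and with coproducts inherited from $U$ via the fibre isomorphisms. The paper's proof is only a sketch that does not discuss fullness, very strong coproducts, or Beck--Chevalley, so your isolation of the key lemma --- that the restricted comprehension $\{-\} : \mathcal{E}_A \to \mathcal{B}$ preserves the relevant pullbacks --- is a welcome refinement of, not a departure from, its argument.
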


\begin{proof}\emph{(Sketch)}
  First, $U^A$ is well-known to be a fibration by its definition via
  the change-of-base construction \cite{jacobs99book}. The truth
  functor for $U^A$ is defined for objects $P$ in $\mathcal{E}_A$ by
  $\top^AP = (P, \top\{P\})$, and the comprehension functor for $U^A$
  is defined by $\{(P,Y)\}^A = \Sigma_{\pi_{P}}Y$, where $P \in
  \mathcal{E}_A$ and $Y \in \mathcal{E}_{\{P\}}$. Coproducts are
  defined directly using the coproducts of $U$.
\end{proof}

\begin{eorollary}\label{ex:indexed-refinement}
  To demonstrate the refinement of an inductive type which is already
  indexed we consider a small expression language of well-typed
  terms. Let $\mathcal{T} = \{ \mathsf{int}, \mathsf{bool} \}$ be the
  set of possible base types. The language is $\mu F_{\tyname{wtexp}}$
  for the functor $F_{\tyname{wtexp}} : \Fam(\Set)_{\mathcal{T}} \to
  \Fam(\Set)_{\mathcal{T}}$ given by
\begin{eqnarray*}
  F_{\tyname{wtexp}}(\mathcal{T},P) & = &
  (\mathcal{T}, \lambda t : \mathcal{T}.\begin{array}[t]{l}
    \{ z : \mathbb{Z} \sepbar t = \mathsf{int} \} \\
    +\ \{ b : \mathbb{B} \sepbar t = \mathsf{bool} \} \\
    +\ \{ x_1 :  P t,\ x_2 : P t \sepbar t = \mathsf{int} \} \\
    +\ \{ x_1 : P \mathsf{bool},\ x_2 : P t,\ x_3 : P t \})
  \end{array}
\end{eqnarray*}
This specification of an inductive type corresponds to the following
Agda 2 declaration, where we write \verb|Ty| for the Agda 2
equivalent of the set $\mathcal{T}$:
\begin{verbatim}
data WTExp : Ty -> Set where
  intConst  : Integer -> WTExp Int
  boolConst : Boolean -> WTExp Bool
  add       : WTExp Int -> WTExp Int -> WTExp Int
  if        : (t : Ty) -> WTExp Bool -> WTExp t -> WTExp t -> WTExp t
\end{verbatim}
The type \verb|WTExp| cannot be constructed by the process of
refinement presented in \autoref{sec:refining-inductive}. Indeed, the
indices of subexpressions, and not just the overall indexes, are
constrained in the types of the \texttt{add} and \texttt{if}
constructors. This accords with the discussion at the end of
\hyperref[sec:refining-examples]{Section
  \ref*{sec:refining-examples}}. Fortunately we can, and will, show in
\autoref{sec:partial-refinement} how to extend the notion of
refinement to the situation where not every element of a data type can
be assigned an index.

Meanwhile, in light of \thmref{thm:change-of-base}, we can refine the
already indexed type $\mu F_{\tyname{wtexp}}$. For any $t$, write
$\mathsf{IntConst}$, $\mathsf{BoolConst}$, $\mathsf{Add}$, and
$\mathsf{If}$ for the injections into $(\mathit{snd}\,
(F_{\tyname{wtexp}}(\mathcal{T},P)))\, t$. Let $\mathbb{B} = \{
\mathsf{true}, \mathsf{false} \}$ denote the set of booleans, and
assume there exists a $\mathcal{T}$-indexed family $T$ such that $T\
\mathsf{int} = \mathbb{Z}$ and $T\ \mathsf{bool} = \mathbb{B}$. Then
$T$ gives a semantic interpretation of the types from $\mathcal T$
that can be used to define an $F_{\tyname{wtexp}}$-algebra
$\mathit{evalAlg}$ whose fold specifies a ``tagless'' interpreter. We
have:
\begin{displaymath}
 \begin{array}{lll}
   \mathit{evalAlg} & : & F_{\tyname{wtexp}}(\mathcal{T},T) \to (\mathcal{T},T) \\
   \mathit{evalAlg} & = & (\mathit{id}, \lambda x : \mathcal{T}.\
   \lambda t : \mathit{snd}\, (F_{\tyname{wtexp}}
   (\mathcal{T},T))\, x.\
   \mathsf{case}\ t\ \mathsf{of} \\
   & & \quad\begin{array}{lll}
     \quad\mathsf{IntConst}\ z  & \Rightarrow & z\\
     \quad\mathsf{BoolConst}\ b  & \Rightarrow & b\\
     \quad\mathsf{Add}\ (z_1, z_2) &
           \Rightarrow  & z_1 + z_2 \\  
     \quad\mathsf{If}\ (b, x_1, x_2) &
           \Rightarrow & \mathsf{if}\  b\  \mathsf{then}\ 
           x_1\ \mathsf{else}\ x_2) 
       \end{array}
  \end{array}
\end{displaymath}
The function $\fold{\mathit{evalAlg}} : \forall t.\ \mu
F_{\tyname{wtexp}}t \to Tt$ does indeed give a semantics to each
well-typed expression. Refining $\mu F_{\tyname{wtexp}}$ by
$\mathit{evalAlg}$ yields an object \verb|WTExpSem| of
$\Fam(\Set)_{\mathcal{T}} \times_{\Set} \Fam(\Set)$ over
$(\mathcal{T},T)$, i.e, an object of $\Fam(\Set)$ indexed by
$\{(\mathcal{T},T)\}$. This $\{(\mathcal{T},T)\}$-indexed data type
associates to every well-typed expression that expression's
semantics. As an Agda 2 declaration, it can be expressed as follows,
after applying a few type isomorphisms to make the declaration more
idiomatic:
\begin{verbatim}
data WTExpSem : (t : Ty) -> T t -> Set where
  intConst  : (z : Integer) ->       WTExpSem Int z
  boolConst : (b : Boolean) ->       WTExpSem Bool b
  add       : (z1 z2 : Integer) ->
              WTExpSem Int z1 ->
              WTExpSem Int z2 ->     WTExpSem Int (z1 + z2)
  if        : (b : Boolean) ->
              (t : Ty) ->
              (x1 x2 : T t) ->
              WTExpSem Bool b ->
              WTExpSem t x1 ->
              WTExpSem t x2 ->       WTExpSem t (if b then x1 else x2)
\end{verbatim}
Here, we have assumed a standard \texttt{if\_then\_else} notation for
eliminating booleans.
\end{eorollary}

\section{Partial Refinement}\label{sec:partial-refinement}

In \hyperref[sec:refining-inductive]{Sections
  \ref*{sec:refining-inductive}} and
\hyperref[sec:indexed-refinement]{ \ref*{sec:indexed-refinement}} we
assumed that every element of an inductive type can be assigned an
index. Every list has a length, every tree has a number of leaves,
every well-typed expression has a semantic meaning, and so on. But how
can an inductive type be refined if only {\em some} data have values
by which we want to index?  For example, how can the inductive type of
well-typed expressions of \hyperref[ex:indexed-refinement]{Example
  \ref*{ex:indexed-refinement}} be obtained by refining a data type of
untyped expressions by an algebra for type assignment? And how can the
inductive type of red-black trees be obtained by refining a data type
of coloured trees by an algebra enforcing the well-colouring
properties? As these questions suggest, the problem of refining
subsets of inductive types is a common and naturally occurring
one. Our partial refinement technique, which we now describe, can
solve this problem.

\subsection{Partial Algebras}

To generalise our theory to partial refinements we move from algebras
to partial algebras. If $F$ is a functor, then a {\em partial
  $F$-algebra} is a pair $(A, \alpha : FA \to (1+A))$ comprising a
carrier $A$ and a structure map $\alpha : FA \to (1+A)$. We write
$\ok:A \to 1+A$ and $\fail:1 \to 1+A$ for the injections into $1+A$,
and often refer to a partial algebra solely by its structure map. The
functor $MA = 1+A$ is (the functor part of) the {\em error monad}.

\begin{eorollary} The inductive type of expressions is $\mu
  F_{\tyname{exp}}$ for the functor $F_{\tyname{exp}}X = \mathbb{Z} +
  \mathbb{B} + (X \times X) + (X \times X \times X)$.  Letting
  $\mathcal{T} = \{ \mathsf{int}, \mathsf{bool} \}$ as in
  \hyperref[ex:indexed-refinement]{Example
    \ref*{ex:indexed-refinement}}, and using the obvious convention
  for naming the injections into $F_{\tyname{exp}}X$, types can be
  inferred for expressions using the following partial
  $F_{\tyname{exp}}$-algebra:
\begin{displaymath}
  \begin{array}{lcl}
    \mathit{tyInfer} & : & F_{\tyname{exp}}\mathcal{T} \to 1 + \mathcal{T} \\
    \mathit{tyInfer}\ (\mathsf{IntConst}\ z) &=& \ok\ \mathsf{int} \\
    \mathit{tyInfer}\ (\mathsf{BoolConst}\ b) &=& \ok\ \mathsf{bool} \\
    \mathit{tyInfer}\ (\mathsf{Add}\ (t_1, t_2)) &=&
    \left\{
      \begin{array}{ll}
        \ok\ \mathsf{int} &
        \textrm{if }t_1 = \mathsf{int}\textrm{ and }t_2 = \mathsf{int} \\
        \mathsf{fail} & \textrm{otherwise}
      \end{array}
    \right. \\
    \mathit{tyInfer}\ (\mathsf{If}\ (t_1,t_2,t_3)) &=&
    \left\{
      \begin{array}{ll}
        \ok\ t_2 &
          \textrm{if }t_1 = \mathsf{ bool } \textrm{ and }t_2 = t_3\\
        \mathsf{fail} & \textrm{otherwise}
      \end{array}
    \right.
  \end{array}
\end{displaymath}
\end{eorollary}

\begin{eorollary}
  Let $\mathbb{C} = \{ \mathsf{R}, \mathsf{B} \}$ be a set of colours.
  The inductive type of coloured trees is $\mu F_{\tyname{ctree}}$ for
  the functor $F_{\tyname{ctree}}X = 1 + \mathbb{C} \times X \times
  X$.  We write $\mathsf{Leaf}$ and $\mathsf{Br}$ for injections into
  $F_\tyname{ctree}X$. Red-black trees~\cite{cormen01intro} are
  coloured trees satisfying the following constraints:
\begin{enumerate}
\item Every leaf is black;
\item Both children of a red node are black;
\item For every node, all paths to leaves contain the same number of
  black nodes.
\end{enumerate}
We can check whether or not a coloured tree is a red-black tree using
the following partial $F_{\tyname{ctree}}$-algebra. Its carrier
$\mathbb{C} \times \mathbb{N}$ records the colour of the root in the
first component and the number of black nodes to any leaf, assuming
this number is the same for every leaf, in the second. We have:
\begin{displaymath}
  \begin{array}{lll}
    \mathit{checkRB}  &:& F_{\tyname{ctree}}(\mathbb{C} \times
    \mathbb{N}) \to 1 + (\mathbb{C} \times \mathbb{N}) \\ 
    \mathit{checkRB}\ \mathsf{Leaf} & = & 
    \ok\ (\mathsf{B}, 1) \\
    \mathit{checkRB}\ (\mathsf{Br}\ (\mathsf{R}, (s_1,n_1), (s_2, n_2))) & = & 
    \left\{
      \begin{array}{ll}
        \ok\ (\mathsf{R}, n_1) &
          \textrm{if }s_1 = s_2 = \mathsf{B} 
          \textrm{ and }n_1 = n_2 \\
        \mathsf{fail}        & \textrm{otherwise}
      \end{array}
    \right.\\
    \mathit{checkRB}\ (\mathsf{Br}\ (\mathsf{B}, (s_1,n_1), (s_2, n_2))) & = & 
    \left\{
      \begin{array}{ll}
        \ok\ (\mathsf{B}, n_1 + 1) & \textrm{if }n_1 = n_2 \\
        \mathsf{fail}        & \textrm{otherwise}
      \end{array}
    \right.
  \end{array}
\end{displaymath}
\end{eorollary}

\subsection{Using a Partial Algebra to Select Elements}

We now show how, given a partial algebra, we can use it to select some
of the elements of an underlying type and assign them indices. The key
to doing this is to turn every partial $F$-algebra into a (total)
$F$-algebra. Let $\lambda : F \circ M \to M \circ F$ be any
distributive law for the error monad $M$ over the functor $F$. Then
$\lambda$ respects the unit and multiplication of $M$
(see~\cite{barr83toposes} for details). Every partial $F$-algebra
$\kappa:FA \to (1+A)$ generates an $F$-algebra $\overline{\kappa} :
F(1 +A) \to (1 + A)$ defined by $\overline{\kappa} = [\fail, \kappa]
\circ \lambda_A$, where $[\fail, \kappa]$ is the cotuple of the
functions $\fail$ and $\kappa$.

We can use $\overline{\kappa}$ to construct the following global
characterisation of the indexed type for which we seek an inductive
characterisation:
\begin{displaymath}
  (A, \lambda a.\ \{ x : \mu F \sepbar \fold{\overline{\kappa}} x =
  \ok\ a \}) 
\end{displaymath}
As in~\parenref{eqn:vector}, we can consider this characterisation a
specification; it is similar to the specification
in~\hyperref[sec:refining-inductive]{Section
  \ref*{sec:refining-inductive}}, except that the index generated by
the algebra $\overline{\kappa}$ is required to return $\ok\, a$ for
some $a \in A$. We can rewrite this specification as follows, using
the categorical constructions from \autoref{sec:families} and
\thmref{thm:muFalpha-initial}:
\begin{equation}
  \label{eqn:muFkappa}  
  (A, \lambda a.\ \{ x : \mu F \sepbar \fold{\overline{\kappa}} x =
  \ok\, a \}) = \ok^* \circ \Sigma_{\fold{\overline{\kappa}}} \top (\mu F) =
  \ok^* \mu F^{\overline{\kappa}}
\end{equation}
Rewriting the specification in this way links partial refinements with
the indexed inductive type generated by the refinement process given
in \autoref{sec:refining-inductive}.

\subsection{Construction and Correctness of Partial Refinement}

Refining $\mu F$ by the $F$-algebra $\overline{\kappa}$ using the
techniques of \autoref{sec:refining-inductive} would result in an
inductive type indexed by $1+A$. But our motivating examples suggest
that what we actually want is an $A$-indexed type that inductively
describes only those terms having values of the form $\ok\ a$ for some
$a \in A$. Partial refinement constructs, from a functor $F$ with
initial algebra $\inn_F : F (\mu F) \to \mu F$, and a partial
$F$-algebra $\kappa : FA \to 1+A$, a functor $F^{?\kappa}$ such that
$\mu F^{?\kappa} \cong (A, \lambda a.\ \{ x : \mu F \sepbar
\fold{\overline{\kappa}} x = \ok\ a \}) = \ok^* \mu
F^{\overline{\kappa}}$. To this end, we define
\begin{equation}\label{eqn:Fkappa}
  F^{?\kappa} = \ok^* \circ \Sigma_{\kappa} \circ \hat{F}_A
\end{equation}
We note that, in the special case of the families fibration, this
definition specialises to $F^{?\kappa} = (A, \lambda a. \{ x :
F\{(A,P)\} \sepbar \kappa(F\pi_{(A,P)}x) = \ok\ a \})$. Now, since left
adjoints preserve initial objects, we can prove $\mu F^{?\kappa} \cong
\ok^* \mu F^{\overline{\kappa}}$ by lifting the adjunction on the left
below (cf.~\autoref{sec:indexed-products}) to an adjunction between
$\Alg_{F^{?\kappa}}$ and $\Alg_{F^{\overline{\kappa}}}$ via
\thmref{thm:alg-adjunctions}:
\begin{displaymath}
\begin{array}{lcl}
  \adjunction{\Fam(\Set)_A}{\Fam(\Set)_{1 +
      A}}{\ok^*}{\Pi_{\ok}} 
& \quad\Rightarrow\quad
&
  \adjunction{\Alg_{F^{?\kappa}}}{\Alg_{F^{\overline{\kappa}}}}{}{}  
\end{array}
\end{displaymath} 

\noindent
To satisfy the precondition of \thmref{thm:alg-adjunctions}, we must
prove that $F^{?\kappa} \circ \ok^* \cong \ok^* \circ
F^{\overline{\kappa}}$. To show this, we reason as follows:
\begin{center}
  \begin{tabular}{cll}
    & $\ok^* \circ F^{\overline{\kappa}}$ & \\
    $=$     & $\ok^* \circ \Sigma_{\overline{\kappa}} \circ \hat{F}_A$ &
    by definition of $F^{\overline{\kappa}}$\\ 
    $\cong$ & $\ok^* \circ \Sigma_{\kappa} \circ (F\,\ok)^* \circ
    \hat{F}_A$ & by \lemref{lem:commute-ok} below \\ 
    $\cong$ & $\ok^* \circ \Sigma_{\kappa} \circ \hat{F}_{1+A} \circ \ok^*$ & by
    \lemref{lem:lifting-commute-reindex} \\ 
    $=$     & $F^{?\kappa} \circ \ok^*$ & by definition of $F^{?\kappa}$
  \end{tabular}
\end{center}
In these steps we have made use of two auxiliary results, relying on
two assumptions. First, in order to apply
\lemref{lem:lifting-commute-reindex}, we have assumed that $F$
preserves pullbacks. Secondly, we have made use of the vertical
natural isomorphism $\ok^* \circ \Sigma_{\overline{\kappa}} \cong
\ok^* \circ \Sigma_{\kappa} \circ (F \ok)^*$. We may deduce the
existence of the latter if we assume that the following property,
which we call \emph{non-introduction of failure}, is satisfied by the
distributive law $\lambda$ for the error monad $M$ over $F$: for all
$x : F(1 + A)$ and $y : FA$, $\lambda_A\, x = \ok\ y$ if and only if
$x = F\, \ok\, y$.  This property strengthens the usual unit axiom for
distributive laws in which the implication holds only from right to
left, and ensures that if applying $\lambda$ does not result in
failure, then no failures were present in the data to which $\lambda$
was applied. Every container functor has a canonical distributive law
for $M$ satisfying the non-introduction of failure property.

\begin{lem}\label{lem:commute-ok}
  If the distributive law $\lambda$ satisfies non-introduction of
  failure, then $\ok^* \circ \Sigma_{\overline{\kappa}} \cong \ok^*
  \circ \Sigma_{\kappa} \circ (F \ok)^*$.
\end{lem}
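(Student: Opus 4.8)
The plan is to verify the claimed isomorphism directly in the families fibration by computing both functors explicitly and exhibiting a natural, vertical bijection between the resulting fibres, so that the entire content of the lemma is concentrated in a single use of the non-introduction of failure property. Note first that both $\ok^* \circ \Sigma_{\overline{\kappa}}$ and $\ok^* \circ \Sigma_{\kappa} \circ (F\ok)^*$ are functors from $\Fam(\Set)_{F(1+A)}$ to $\Fam(\Set)_A$. Unfolding the definitions of $\Sigma_{(-)}$ and $(-)^*$ from \autoref{sec:families}, I would compute, for an arbitrary object $(F(1+A),P)$,
\[
\ok^* \Sigma_{\overline{\kappa}}(F(1+A),P) = (A,\ \lambda a.\ \{ (z,p) \sepbar z : F(1+A),\ p : Pz,\ \overline{\kappa}\, z = \ok\ a \})
\]
and
\[
\ok^* \Sigma_{\kappa} (F\ok)^* (F(1+A),P) = (A,\ \lambda a.\ \{ (y,p) \sepbar y : FA,\ p : P(F\ok\, y),\ \kappa\, y = \ok\ a \}).
\]
It therefore suffices to produce, for each $a$, a bijection between these two fibre sets that is natural in $P$; since the bijection is computed fibrewise over $A$, it is automatically vertical.

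The crux is to analyse the condition $\overline{\kappa}\, z = \ok\ a$. Expanding $\overline{\kappa} = [\fail,\kappa] \circ \lambda_A$, the value $\overline{\kappa}\, z$ equals $\ok\ a$ exactly when $\lambda_A z$ lies in the right summand, say $\lambda_A z = \ok\ y$ with $y : FA$, and moreover $\kappa\, y = \ok\ a$; the left summand contributes only $\fail$ and so can never equal $\ok\ a$. Here $y$ is uniquely determined by $z$, because the coproduct injection $\ok : FA \to 1+FA$ is monic. Now the non-introduction of failure property supplies the equivalence $\lambda_A z = \ok\ y \iff z = F\ok\, y$. Combining these, $\overline{\kappa}\, z = \ok\ a$ holds if and only if there is a (necessarily unique) $y : FA$ with $z = F\ok\, y$ and $\kappa\, y = \ok\ a$.

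With this equivalence in hand, I would define the bijection as follows. Given $(y,p)$ in the second fibre, the backward direction $z = F\ok\, y \Rightarrow \lambda_A z = \ok\, y$ of non-introduction of failure gives $\overline{\kappa}(F\ok\, y) = \kappa\, y = \ok\ a$, so $(F\ok\, y,\ p)$ lies in the first fibre; conversely, given $(z,p)$ in the first fibre, the analysis above yields the unique $y$ with $z = F\ok\, y$, and transporting $p$ along this equality produces an element $(y,p)$ of the second fibre. These assignments are mutually inverse, they leave the $P$-component untouched apart from transport along $z = F\ok\, y$, and hence they are natural in $P$. Equivalently, and perhaps more cleanly, one may package this as a Beck--Chevalley argument: each side is, by the Beck--Chevalley condition for coproducts in the families fibration, determined by a pullback of $\ok$ --- against $\overline{\kappa}$ on the left, and against $\kappa$ (with a further reindexing by $F\ok$) on the right --- and the equivalence established above is precisely an isomorphism between these two pullback objects respecting their projections to $A$.

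I expect the only real obstacle to be the forward direction $\lambda_A z = \ok\ y \Rightarrow z = F\ok\, y$ of non-introduction of failure, which is exactly the strengthening of the usual unit axiom and precisely what guarantees that a non-failing index could only have arisen from subterms that were themselves already indexed; everything else --- the fibrewise computations, the uniqueness of $y$ from monicity of the injection, and the verification of naturality and verticality --- is routine bookkeeping.
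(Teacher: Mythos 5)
Your proof is correct and follows essentially the same route as the paper's: both compute the two composites explicitly in the families fibration on an arbitrary object over $F(1+A)$ and reduce the isomorphism to the single observation that $\overline{\kappa}\,z = \ok\ a$ forces $\lambda_A z$ into the right summand, whereupon non-introduction of failure rewrites $z$ as $F\ok\,y$. Your version simply spells out the bijection, its inverse, and the uniqueness of $y$ in more detail than the paper, which compresses all of this into one displayed chain of isomorphisms.
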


\begin{proof}
  Given $(F(1+A), P : F(1 + A) \to \Set)$, we have    
  \begin{eqnarray*}
    & & (\ok^* \circ \Sigma_{\overline{\kappa}})(F(1+A),P)\\
    & = & (A, \lambda a:A.\ \{ (x_1 : F(1 + A), x_2 : Px_1) \sepbar [\fail,
    \kappa](\lambda_A x_1) = \ok\ a \}) \\ 
    &\cong& (A, \lambda a:A.\ \{ x_1 : FA, x_2 : P(F\, \ok\, x_1) \sepbar
    \kappa x_1 = \ok\ a \}) \\ 
    & \cong & (A, \ok^* \circ \Sigma_{\kappa} \circ (F\,
    \ok)^* (F(1+A),P))    
  \end{eqnarray*}
  Here, we have instantiated the definitions in terms of the
  constructions from \autoref{sec:families} for the families
  fibration.
\end{proof}

\noindent
Putting everything together, we have shown the correctness of partial
refinement:

\begin{thm}\label{thm:partial-refinement}
  If $\lambda$ is a distributive law for the error monad $M$ over $F$
  with the non-introduction of failure property, and if $F$ preserves
  pullbacks, then $F^{?\kappa}$ has an initial algebra whose carrier
  is given by any, and hence all, of the expressions
  in~\parenref{eqn:muFkappa}.
\end{thm}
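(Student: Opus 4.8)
The plan is to deduce the statement from \thmref{thm:muFalpha-initial}, which already identifies the initial algebra of a \emph{total} refinement, by transporting initiality across the left adjoint $\ok^*$. The vehicle for this transport is \thmref{thm:alg-adjunctions}, which I would apply with left adjoint $\ok^*$, right adjoint $\Pi_{\ok}$, with $F^{?\kappa}$ playing the role of the endofunctor $F$ on $\Fam(\Set)_A$ and $F^{\overline{\kappa}}$ playing the role of $G$ on $\Fam(\Set)_{1+A}$. The base adjunction $\ok^* \dashv \Pi_{\ok}$ is the one displayed immediately before the theorem.

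The one hypothesis of \thmref{thm:alg-adjunctions} that needs checking is the natural isomorphism $F^{?\kappa} \circ \ok^* \cong \ok^* \circ F^{\overline{\kappa}}$, and this is exactly the chain of isomorphisms established above. I would run it by unfolding $F^{\overline{\kappa}} = \Sigma_{\overline{\kappa}} \circ \hat{F}_{1+A}$, rewriting $\ok^* \circ \Sigma_{\overline{\kappa}}$ as $\ok^* \circ \Sigma_{\kappa} \circ (F\,\ok)^*$ via \lemref{lem:commute-ok}, and then commuting $(F\,\ok)^*$ past the lifting via \lemref{lem:lifting-commute-reindex} to reach $\ok^* \circ \Sigma_{\kappa} \circ \hat{F}_A \circ \ok^* = F^{?\kappa} \circ \ok^*$. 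The two standing assumptions of the theorem are precisely what these lemmas consume: non-introduction of failure for $\lambda$ powers \lemref{lem:commute-ok}, and preservation of pullbacks by $F$ powers \lemref{lem:lifting-commute-reindex}.

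With the hypothesis discharged, \thmref{thm:alg-adjunctions} produces a lifted adjunction between $\Alg_{F^{\overline{\kappa}}}$ and $\Alg_{F^{?\kappa}}$ whose left adjoint acts on carriers by $\ok^*$. Since $\overline{\kappa} : F(1+A) \to 1+A$ is a genuine (total) $F$-algebra, \thmref{thm:muFalpha-initial} gives it an initial algebra with carrier $\mu F^{\overline{\kappa}} = \Sigma_{\fold{\overline{\kappa}}} \top(\mu F)$. Left adjoints preserve initial objects, so the image of this initial algebra under the lifted left adjoint is the initial $F^{?\kappa}$-algebra, and by the carrier recipe recorded just after \thmref{thm:alg-adjunctions} its carrier is $\ok^* \mu F^{\overline{\kappa}}$. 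By the equalities in \parenref{eqn:muFkappa}, this carrier is exactly the family $(A, \lambda a.\ \{ x : \mu F \sepbar \fold{\overline{\kappa}} x = \ok\ a \})$, so all the expressions listed there name the same object, as required.

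The genuinely delicate step is the verification of the precondition of \thmref{thm:alg-adjunctions}, and within it the use of \lemref{lem:commute-ok}; everything else is formal, being the lifting of an adjunction to algebra categories and reading off the carrier of a left adjoint applied to an initial object. The subtlety in \lemref{lem:commute-ok} is that, along the fibre over $\ok$, an index of the form $\ok\ a$ must be traceable back to data in which no subterm failed, which is guaranteed by — and is the reason we assume — the non-introduction-of-failure property of the distributive law. Without it the first isomorphism in the chain would fail and the reduction to the total case of \thmref{thm:muFalpha-initial} would break down.
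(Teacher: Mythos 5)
Your proposal is correct and follows essentially the same route as the paper: the paper also lifts the adjunction $\ok^* \dashv \Pi_{\ok}$ to algebra categories via \thmref{thm:alg-adjunctions}, verifies the precondition $F^{?\kappa} \circ \ok^* \cong \ok^* \circ F^{\overline{\kappa}}$ by exactly the chain you describe (\lemref{lem:commute-ok} followed by \lemref{lem:lifting-commute-reindex}), and then reads off the carrier $\ok^* \mu F^{\overline{\kappa}}$ from preservation of initial objects by the lifted left adjoint together with \thmref{thm:muFalpha-initial}. Your identification of the two standing hypotheses with the two lemmas that consume them matches the paper's own accounting.
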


\noindent
In fact, Lemma 6.1, and hence Theorem 6.2, holds in the more general
setting of a full cartesian Lawvere category with products and very
strong coproducts that satisfy the Beck-Chevalley condition for
coproducts, provided that the base category satisfies
extensivity~\cite{clw93}. In the general setting, the non-introduction
of failure property can be formulated as requiring that the following
square (which is the unit axiom for the distributive law $\lambda$) is
a pullback:
\begin{displaymath}
  \xymatrix{
    {FA} \ar[r]^(.4){F\ok} \ar[d]_{\mathit{id}}
    &
    {F(1+A)} \ar[d]^{\lambda_A}
    \\
    {FA} \ar[r]^(.4)\ok
    &
    {1 + FA}
  }
\end{displaymath}
Moreover,~\thmref{thm:change-of-base} extends to show that extensivity
is also preserved by change-of-base provided all of the the fibres of
the given full cartesian Lawvere category satisfy extensivity. This
ensures that the process of partial refinement can be iterated as
often as desired.

\section{Refinement by Zygomorphisms and Small Indexed
Induction-Recursion}\label{sec:zygo-refine}

The refinement process of \autoref{sec:refining-inductive} allows us
to refine an inductive data type by any function definable as a fold.
Despite this generality, the restriction to functions defined by folds
can be a burden. Consider, for example, the following structurally
recursive function on natural numbers that computes factorials:
\begin{verbatim}
factorial : Nat -> Nat
factorial zero     = succ zero
factorial (succ n) = succ n * factorial n
\end{verbatim}
This \verb|factorial| function is not immediately expressible as a
fold of an algebra on the natural numbers; indeed, the right-hand side
of the second clause uses both the result of a recursive call and the
current argument, but a fold cannot use the current argument in
computing its result. The style of definition exemplified by
\verb|factorial| is known as a
\emph{paramorphism}~\cite{meertens92paramorphism}. As we recall in
\autoref{sec:para-zygo} below, such definitions can be reduced to
folds. However, reducing \verb|factorial| to a fold and then refining
as in \autoref{sec:refining-inductive} yields a $($\texttt{Nat
  $\times$ Nat}$)$-indexed type, i.e., a doubly indexed type that
reveals the auxiliary data used to define~\texttt{factorial} as a
fold.  But rather than $($\texttt{Nat $\times$ Nat}$)$-indexed type,
what we actually want is an inductive characterisation of the
following \texttt{Nat}-indexed type:
\begin{equation}\label{eq:suffixlist}
  \texttt{FactorialNat n} \cong 
\{ \texttt{x} : \texttt{Nat} \sepbar \texttt{factorial x} = \texttt{n} \}
\end{equation}
If we try to implement \texttt{FactorialNat} inductively in Agda 2,
then we get stuck at the point marked by \texttt{???} below:
\begin{verbatim}
data FactorialNat : Nat -> Set where
  fnzero : FactorialNat (succ zero)
  fnsucc : {n : Nat} ->
           (x : FactorialNat n) ->
           FactorialNat (succ ??? * n)
\end{verbatim}
We'd like to put \verb|x| in place of \verb|???|, but there is a
problem. Indeed, if \verb|x : FactorialNat n|, then
in~\parenref{eq:suffixlist} we know that \verb|x : Nat|, so we can use
the assertion \verb|factorial x = n|.  But in the above Agda 2 code we
cannot conclude that if \verb|x : FactorialNat n|, then 
\verb|x : Nat|, and so we cannot use the fact that 
\verb|factorial x = n|. What is required is a function
{\tt forget} of type {\tt {n : Nat} -> FactorialNat n -> Nat} that
converts an element of \verb|FactorialNat n| into its underlying
natural number. Unfortunately, we cannot first define the data type
\verb|FactorialNat| and then define the function \verb|forget|
thereafter. Instead, as becomes evident upon replacing \verb|???| by
\verb|forget x| in the definition of \verb|FactorialNat|, we must
define both simultaneously.

Fortunately, this can be done using the principle of definition by
\emph{indexed induction-recursion} (IIR) due to Dybjer and Setzer
\cite{dybjer03induction,dybjer06indexed}. Agda 2 supports indexed
induction recursion, and so \texttt{FactorialNat} and \texttt{forget}
can be defined (simultaneously) as follows:
\begin{verbatim}
mutual
  data FactorialNat : Nat -> Set where
    fnzero : FactorialNat (succ zero)
    fnsucc : {n : Nat} ->
             (x : FactorialNat n) ->
             FactorialNat (succ (forget x) * n)

  forget : {n : Nat} -> FactorialNat n -> Nat
  forget fnzero     = zero
  forget (fnsucc x) = succ (forget x)
\end{verbatim}
\noindent
As we have already noted, it is possible to make sense of functions
such as \texttt{factorial} in terms of initial $F$-algebras by using the
existing notion of a \emph{paramorphism} and its generalisation, a
\emph{zygomorphism}, but this gives incorrectly indexed types.
Instead, making use of a presentation of inductive-recursive
definitions as initial algebras (\autoref{sec:initial-algebra-ir}), we
show in \autoref{sec:zygo-refine-correct} that the definition of
\texttt{FactorialNat} can be generalised to an inductive-recursive
type satisfying the analogue of \parenref{eq:suffixlist} for all
zygomorphisms (rather than just \verb|factorial|) and all initial
algebras of functors (rather than just \verb|Nat|).

\subsection{Zygomorphisms and Paramorphisms}\label{sec:para-zygo}

\emph{Zygomorphisms} were introduced by Malcolm \cite{malcolm90}, and
have as a special case the concept of a \emph{paramorphism}
\cite{meertens92paramorphism}. Given a morphism $\gamma : F(D \times
A) \to A$ and an $F$-algebra $\delta : FD \to D$ we define the
$F$-algebra $\overline{\gamma,\delta} : F(D \times A) \to D \times A$
by $\langle \delta \circ F\pi_1, \gamma\rangle$.
The zygomorphism $h$ associated with $\overline{\gamma,\delta}$ is
defined to be $\pi_2 \circ \fold{\overline{\gamma,\delta}} : \mu F \to
A$. It is the unique morphism satisfying the equation $h \circ \inn_F
= \gamma \circ F\langle\fold{\delta},h\rangle$. Paramorphisms are a
special case of zygomorphisms for which $\delta$ is the initial
$F$-algebra $\inn_F : F(\mu F) \to \mu F$.

The \texttt{factorial} function above can be represented as a
paramorphism (and hence as a zygomorphism). Recalling that the carrier
of the initial algebra for the functor $F_{\tyname{Nat}}X = 1 + X$ is
$\mathbb{N}$, we can define
\begin{equation}\label{eq:fact-paramorphism}
  \begin{array}{lll}
    \mathit{fact} & : & F_{\tyname{Nat}}(\mathbb{N} \times \mathbb{N}) \to
\mathbb{N} \\
    \mathit{fact}\ \mathsf{zero} & = & 1 \\
    \mathit{fact}\ (\mathsf{succ}\ (n,x)) & = & (n + 1) * x
  \end{array}
\end{equation}
Here, we have used $\mathsf{zero}$ and $\mathsf{succ}$ as suggestive
names for the two injections into $1 + X$. Taking $\gamma$ to be
$\mathit{fact}$, the induced paramorphism from $\mathbb{N}$ to
$\mathbb{N}$ is exactly the factorial function.

\subsection{Initial Algebra Semantics of Indexed Small
Induction-Recursion}\label{sec:initial-algebra-ir}

Indexed induction-recursion allows us to define a family of types $X :
A \to \Set$ simultaneously with a recursive function $f : \forall a.\
Xa \to Da$, for some $A$-indexed collection of potentially large types
$Da$. We are interested in the case when $D$ does not depend on $A$,
so that $Da$ is $D$, and $D$ is small, i.e., $D$ is a set. In this
situation, the semantics of IIR definitions can be given as initial
algebras of functors over slice categories. We recall the definition
of slice categories on $\Set$. Given a set $D$, the {\em slice
  category} $\Set/D$ on $\Set$ has as objects pairs $(Z : \Set, f : Z
\to D)$. A morphism from $(Z,f)$ to $(Y,g)$ in $\Set/D$ is a function
from $h : Z \to Y$ such that $f = g \circ h$. We write $f$ for $(Z,f)$
when $Z$ can be inferred from context.

Noting that $\forall a.Xa \to D$ is isomorphic to $(\Sigma a. Xa) \to
D$ and that $\Sigma a. Xa = \{(A,X)\}$, this leads us to consider the
category $\Set^A \times_\Set \Set/D$ each of whose objects is an
$A$-indexed set $X$ together with a function from $\{(A,X)\}$ to
$D$. A morphism in this category from $(X, f)$ to $(X', g)$ is a
function $\phi : \forall a. Xa \to X'a$ such that $\forall a : A.\, p:
Xa.\, f(a,p) = g (a, \phi a p)$. In fact, this category is the
following pullback:
\begin{displaymath}
  \xymatrix{
    {\Set^A \times_\Set \Set/D} \ar[r] \ar[d] \pullbackcorner
    &
    {\Set/D} \ar[d]^{\pi_1}
    \\
    {\Set^A} \ar[r]^{\{-\}}
    &
    {\Set}
  }
\end{displaymath}

The pair (\texttt{FactorialNat}, \texttt{forget}) can be interpreted
as the carrier of the initial algebra of the following functor on
$\Set^{\mathbb{N}} \times_{\Set} \Set/\mathbb{N}$:
\begin{equation}\label{eq:factorialnat}
  \begin{array}[t]{l}
      F_{\tyname{FactorialNat}}(X : \Set^{\mathbb{N}}, f : \{(\mathbb{N},X)\}
\to \mathbb{N}) = \\
      \quad
      \begin{array}{l}
        (\lambda n.\ \{* \sepbar n = 1\} + \{(n_1 : \mathbb{N}, x : Xn_1)
\sepbar n = (n_1 + 1) * f (n_1, x) \}, \\
        ~\lambda(n,x).
        \begin{array}[t]{l}
          \mathsf{case}\ x\ \mathsf{of}\\
          \quad\mathsf{inl}\ * \Rightarrow 0 \\
          \quad\mathsf{inr}\ (n_1,x) \Rightarrow f(n_1,x) + 1)
        \end{array}
      \end{array}
  \end{array}
\end{equation}
The first component of $F_{\tyname{FactorialNat}}(X,f)$ defines the
constructors of \texttt{FactorialNat} in a manner similar to that
described in \hyperref[sec:indexed-ind-types]{Section
  \ref*{sec:indexed-ind-types}}. Note that this first component
depends on both $X$ and $f$, which is characteristic of
inductive-recursive, as well as of indexed inductive-recursive,
definitions. The second component of $F_{\tyname{FactorialNat}}(X,f)$
extends the function $f$ to the new cases given in the first component
of $F_{\tyname{FactorialNat}}(X,f)$.

To develop refinement by zygomorphisms, we use a similar methodology
to that in \autoref{sec:partial-refinement}. We first use the
refinement process of \autoref{sec:refining-inductive} to generate a
functor on $\Set^{D \times A}$ which has an initial algebra, and then
apply \thmref{thm:alg-adjunctions} with the adjoint equivalence in the
next theorem to produce the initial algebra for the functor on $\Set^A
\times_\Set \Set/D$ that we define
in \parenref{eq:zygo-refine-functor} below.

\begin{thm}\label{thm:ir-equiv}
  There is an adjoint equivalence $\Set^A \times_\Set \Set/D \simeq
  \Set^{D \times A}$ which is witnessed by the following pair of
  functors:
\[\begin{array}{lll}
  \Psi & : & \Set^{D \times A} \to \Set^A \times_{\Set} \Set/D \\
  \Psi(X) & = & (\,\lambda a.\, \{ (d, x) \sepbar d : D, x : X (d,a) \},\; 
  \lambda (a, (d,x)).d)\\
  & & \\
  \Phi & : & \Set^A \times_{\Set} \Set/D \to \Set^{D \times A} \\
  \Phi(X,f) &=& \lambda(d,a).\, \{ x : X a \sepbar f (a,x) = d \}
  \end{array}\]
\end{thm}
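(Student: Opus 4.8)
The plan is to exhibit $\Psi$ and $\Phi$ as mutually inverse up to natural isomorphism, and then to arrange the resulting pair of natural isomorphisms into a unit and counit satisfying the triangle identities. Conceptually this is nothing more than the fibrewise version, taken uniformly over $a : A$, of the standard equivalence $\Set/D \simeq \Set^D$ between a slice category and a category of indexed sets: recording, for each $x : Xa$, the value $f(a,x) : D$ at which it sits is the same datum as indexing $Xa$ by $D$ from the outset. So I expect both composites to be isomorphic to the relevant identity functor, and the substantive work will be checking that the evident bijections of underlying sets are natural and, crucially, respect the $D$-valued function components of objects of $\Set^A \times_\Set \Set/D$.

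First I would compute $\Phi \circ \Psi$. Given $X \in \Set^{D \times A}$, we have $\Psi(X) = (\lambda a.\ \{ (d,x) \sepbar d:D,\ x:X(d,a) \},\ \lambda(a,(d,x)).d)$, and applying $\Phi$ returns, at index $(d,a)$, the set $\{ (d',x') \sepbar d':D,\ x':X(d',a),\ d'=d \}$. The constraint $d'=d$ makes this canonically isomorphic to $X(d,a)$ via $(d,x') \mapsto x'$, with inverse $x \mapsto (d,x)$, giving $\Phi \circ \Psi \cong \Id$. Second I would compute $\Psi \circ \Phi$. Given $(X,f)$, we have $\Phi(X,f) = \lambda(d,a).\ \{ x:Xa \sepbar f(a,x)=d \}$, so the first component of $\Psi(\Phi(X,f))$ is $\lambda a.\ \{ (d,x) \sepbar d:D,\ x:Xa,\ f(a,x)=d \}$. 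The key observation is that summing the fibres of $f(a,-)$ over all $d:D$ simply reassembles $Xa$, since each $x:Xa$ lies in exactly one fibre, namely the one with $d = f(a,x)$; formally $(d,x) \mapsto x$ is a bijection with inverse $x \mapsto (f(a,x),x)$. Moreover the second component of $\Psi(\Phi(X,f))$ sends $(a,(d,x))$ to $d = f(a,x)$, which is exactly $f$ transported across this bijection, so the isomorphism is a genuine morphism of $\Set^A \times_\Set \Set/D$ and not merely of $\Set^A$.

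It then remains to verify naturality of both families of isomorphisms together with the two triangle identities. Naturality is a direct calculation: a morphism $\psi : X \to X'$ of $\Set^{D \times A}$ is sent by $\Psi$ to the fibrewise map $(d,x) \mapsto (d, \psi_{(d,a)} x)$, and chasing this through $\Phi$ and the fibre isomorphism recovers $\psi$ itself; dually, a morphism $\phi : (X,f) \to (X',g)$ is sent by $\Phi$ to the map $x \mapsto \phi\,a\,x$ on fibres, which is well defined into the fibre of $g$ over $d$ precisely because the defining condition $g(a,\phi\,a\,x) = f(a,x) = d$ holds, and this same condition makes the reassembly bijection compatible with $\phi$. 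The main obstacle is bookkeeping rather than mathematics: one must track the dependent-pair structure carefully and check at each step that the candidate isomorphisms commute with the projections to $D$, so that they live in the pullback category. Once these checks go through, the two natural isomorphisms constitute an equivalence, which can be promoted to an adjoint equivalence in the usual way by adjusting one of the isomorphisms so that the triangle identities hold, yielding the claimed adjoint equivalence witnessed by $\Psi$ and $\Phi$.
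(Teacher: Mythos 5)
Your proposal is correct and follows essentially the same route as the paper, which simply observes that the result is an instance of the standard equivalence $\Set/X \simeq \Set^X$ for any set $X$ (here applied fibrewise over $a : A$ with $X = D$) --- exactly the conceptual reduction you identify at the outset. Your explicit computations of $\Phi \circ \Psi \cong \mathrm{Id}$ and $\Psi \circ \Phi \cong \mathrm{Id}$, including the check that the reassembly bijection respects the $D$-valued function component, correctly fill in the details the paper leaves implicit.
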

\begin{proof}
  This is a simple consequence of the fact that, for any set $X$,
 $\Set^X \simeq \Set/X$.
\end{proof}
\noindent
In light of the equivalence demonstrated in \thmref{thm:ir-equiv}, we
could use $\Set^{D \times A}$, rather than $\Set^A \times_\Set
\Set/D$, as the appropriate category for refinement by
zygomorphisms. Our reasons for choosing the latter are twofold. First,
as we noted in the introduction to this section, we want an
$A$-indexed type rather than a $(D\times A)$-indexed type. Secondly,
we want to define a function from that $A$-indexed type into $D$
itself, rather than into a $D$-indexed type.

\subsection{Refinement by Zygomorphisms}\label{sec:zygo-refine-correct}

We now show how to refine an inductive type by a zygomorphism to
obtain an indexed inductive-recursive definition. Generalising the
example of \texttt{FactorialNat} above, we want to construct from an
$F$-algebra $\delta : FD \to D$ and a morphism $\gamma : F(D \times A)
\to A$ an inductive-recursive characterisation of the following
$A$-indexed set and accompanying $D$-valued function:
\begin{equation}\label{eq:zygo-refine-spec}
  (\lambda a.\, \{ (d : D, x : \mu F) \sepbar
  \fold{\overline{\gamma,\delta}} x = (d,a) \},\, \lambda (a,(d,x)).\
  d) : \Set^A \times_{\Set} \Set/D
\end{equation}
Note that although the fold $\fold{\overline{\gamma,\delta}}$
applied to $x$ produces a pair $(d,a)$, the first component of the
pair in \parenref{eq:zygo-refine-spec} is an $A$-indexed set, rather
than an $(A \times D)$-indexed set. We can now see that the object of
$\Set^A \times_\Set \Set/D$ in \parenref{eq:zygo-refine-spec} is
isomorphic to 
\begin{equation}\label{eq:zygo-refine-equiv}
  (\lambda a.\, \{ x : \mu F \sepbar
  \pi_2(\fold{\overline{\gamma,\delta}} x) = a \}, \lambda (a,
  x). \fold{\delta}x) 
\end{equation}
The first component of \parenref{eq:zygo-refine-equiv}, and hence the
first component of \parenref{eq:zygo-refine-spec}, is the refinement
of $\mu F$ by the zygomorphism $\pi_2 \circ
\fold{\overline{\gamma,\delta}}$, and is thus is the $A$-indexed set
we want to characterise inductively. To do this, we
characterise \parenref{eq:zygo-refine-spec} inductively. More
specifically, we prove in \thmref{thm:zygo-refine-correct} below that
the least fixed point of the following functor on $\Set^A \times_\Set
\Set/D$ gives an inductive-recursive characterisation of
\parenref{eq:zygo-refine-spec}:
\begin{equation}\label{eq:zygo-refine-functor}
  F^{\gamma,\delta}(X,f) =
  \begin{array}{l}
    (\lambda a.\ \{ x : F\{(D \times A, \Phi(X,f))\} \sepbar \gamma(F\pi_{(D \times A, \Phi(X,f))} x) = a \}, \\
    \quad\lambda (a,x).\ \delta(F\pi_1(F\pi_{(D \times A, \Phi(X,f))} x)))
  \end{array}
\end{equation}
\noindent
This definition makes use of the functor $\Phi : \Set^A \times_\Set
\Set_{/D} \to \Set^{A \times D}$ defined in \thmref{thm:ir-equiv}. The
first component of $F^{\gamma,\delta}(X,f)$ uses $\Phi$ to bundle up
$X$ and $f$ into a $(D \times A)$-indexed set, and then applies
$\Sigma_\gamma \circ \hat{F}$ as in the basic refinement construction
in \autoref{sec:refining-inductive}. The second component of
$F^{\gamma,\delta}(X,f)$ extracts the underlying $FD$ component of $x$
and then applies $\delta$.

\begin{eorollary}\label{ex:factorial-refine}
  We instantiate the characterisation of $F^{\gamma,\delta}$
  in \parenref{eq:zygo-refine-functor} for the \texttt{factorial}
  function from the introduction to this section. That is, we consider
  the functor $F_{\tyname{Nat}}X = 1 + X$, the $F$-algebra
  $\inn_{F_{\tyname{Nat}}} : F_{\tyname{Nat}} \mathbb{N} \to
  \mathbb{N}$, and the morphism $\mathit{fact} : F_{\tyname{Nat}}
  (\mathbb{N} \times \mathbb{N}) \to \mathbb{N}$ defined
  in \parenref{eq:fact-paramorphism}.  Instantiating
  \parenref{eq:zygo-refine-functor} gives
  \begin{displaymath}
    \begin{array}{cl}
      & F_{\tyname{Nat}}^{\mathit{fact},\inn_{F_{\tyname{Nat}}}}(X,f) \\
      =&
      \begin{array}{l}
        (\lambda n.\{ x : F_{\tyname{Nat}}\{(D \times A, \Phi(X,f)\}) \sepbar \mathit{fact}(F_{\tyname{Nat}}\pi_{(D \times A, \Phi(X,f))}x) = n \}, \\
        \quad\lambda (n,x).\inn_{F_{\tyname{Nat}}} (F_{\tyname{Nat}}\pi_1(F_{\tyname{Nat}}\pi_{(D \times A, \Phi(X,f))}x)))
      \end{array}
 \\
      =&
      \begin{array}{l}
        (\lambda n. \{ x : 1 + \{(D \times A, \Phi(X,f))\} \sepbar \mathit{fact}((1 + \pi_{(D \times A, \Phi(X,f))})x) = n \},\\
        \quad\lambda (n,x). \inn_{F_{\tyname{Nat}}} ((1 + \pi_1)((1 + \pi_{(D \times A, \Phi(X,f))})x)))
      \end{array}
    \end{array}
  \end{displaymath}
  We can rewrite the first component of
  $F^{\mathit{fact},\inn_{F_{\tyname{Nat}}}}_{\tyname{Nat}}(X,f)$ to the following
  $\mathbb{N}$-indexed set depending on $X$ and $f$:
  \begin{displaymath}
    \lambda n. \{ * \sepbar \mathit{fact}(\mathsf{zero}) = n \} + \{ (d,n_1),
    x : X n_1 \sepbar f(n_1,x) = d, \mathit{fact}(\mathsf{succ}(d,n_1)) = n \}.
  \end{displaymath}
  The $d$ component in the second summand above is constrained to be
  $f(n_1,x)$, so we can first remove all references to $d$ and then
  rewrite according to the definition of $\mathit{fact}$ to obtain
  \begin{displaymath}
    \lambda n. \{ * \sepbar 1 = n \} + \{ n_1, x : X n_1 \sepbar 
    (f(n_1,x) + 1) * n_1 = n \}
  \end{displaymath}
  Using this rewriting of the first component of the instantiation, we
  can rewrite the second component of
  $F^{\mathit{fact},\inn_{F_{\tyname{Nat}}}}_{\tyname{Nat}}(X,f)$ to
  use pattern matching and normal arithmetic notation to get
  \begin{displaymath}
    \lambda (n,x).\,
    \mathsf{case}\ x\ \mathsf{of}
    \left\{
      \begin{array}{lcl}
        \mathsf{zero} &\Rightarrow& 0 \\
        \mathsf{succ}(n_1,x) &\Rightarrow& f(n_1,x) + 1
      \end{array}
    \right.
  \end{displaymath}
  We have thus derived the definition of $F_{\tyname{FactorialNat}}$
  from \parenref{eq:factorialnat} solely by way of a mechanical
  process, using the components of the paramorphism that computes
  factorials. Moreover, by \thmref{thm:zygo-refine-correct} below, we
  know that this functor has an initial algebra, and that this initial
  algebra represents the refinement of the natural numbers by the
  zygomorphism defining the function \verb|factorial|.
\end{eorollary}

As described above, the correctness of refinement by a zygomorphism is
a consequence of \thmref{thm:alg-adjunctions} and the adjoint
equivalence from \thmref{thm:ir-equiv}. Indeed, we have:

\begin{thm}\label{thm:zygo-refine-correct}
  The functor $F^{\gamma,\delta} : \Set^A \times_\Set \Set/D \to
  \Set^A \times_\Set \Set/D$ defined
  in \parenref{eq:zygo-refine-functor} has an initial algebra whose
  carrier is given in \parenref{eq:zygo-refine-spec}.
\end{thm}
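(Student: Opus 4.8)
The plan is to obtain the initial $F^{\gamma,\delta}$-algebra by transporting the initial algebra of an ordinary (total) refinement across the adjoint equivalence of \thmref{thm:ir-equiv}, exactly as partial refinement was handled in \autoref{sec:partial-refinement}. The ordinary refinement in question is that of $\mu F$ by the $F$-algebra $\overline{\gamma,\delta} : F(D\times A)\to D\times A$ living over the base $D\times A$. By \thmref{thm:muFalpha-initial}, the functor $F^{\overline{\gamma,\delta}} = \Sigma_{\overline{\gamma,\delta}}\circ\hat{F}_{D\times A}$ on $\Set^{D\times A}$ has an initial algebra with carrier $\Sigma_{\fold{\overline{\gamma,\delta}}}\top(\mu F)$, i.e.\ the $(D\times A)$-indexed set $\lambda(d,a).\ \{x:\mu F\sepbar \fold{\overline{\gamma,\delta}}x=(d,a)\}$. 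All that remains is to move this initial object across the equivalence $\Set^{D\times A}\simeq\Set^A\times_\Set\Set/D$.

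Concretely, I would instantiate \thmref{thm:alg-adjunctions} with $\cat{D}=\Set^{D\times A}$ carrying $G=F^{\overline{\gamma,\delta}}$, with $\cat{C}=\Set^A\times_\Set\Set/D$ carrying $F^{\gamma,\delta}$, and with $L=\Psi$, $R=\Phi$ and the adjunction $\Psi\dashv\Phi$ supplied by \thmref{thm:ir-equiv}. Granting the hypothesis $F^{\gamma,\delta}\circ\Psi\cong\Psi\circ F^{\overline{\gamma,\delta}}$, the theorem lifts $\Psi\dashv\Phi$ to an adjunction $\Psi'\dashv\Phi'$ between $\Alg_{F^{\overline{\gamma,\delta}}}$ and $\Alg_{F^{\gamma,\delta}}$, with $\Psi'$ acting on carriers as $\Psi$. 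Since $\Psi'$ is a left adjoint it preserves initial objects, so the initial $F^{\gamma,\delta}$-algebra is $\Psi'$ applied to the initial $F^{\overline{\gamma,\delta}}$-algebra, and its carrier is $\Psi$ of the carrier computed above. Unfolding $\Psi$ turns $\lambda(d,a).\{x:\mu F\sepbar\fold{\overline{\gamma,\delta}}x=(d,a)\}$ into the pair consisting of $\lambda a.\{(d,x)\sepbar d:D,\ x:\mu F,\ \fold{\overline{\gamma,\delta}}x=(d,a)\}$ together with the projection $\lambda(a,(d,x)).d$, which is precisely the object displayed in \parenref{eq:zygo-refine-spec}.

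The main obstacle is verifying the interchange isomorphism $F^{\gamma,\delta}\circ\Psi\cong\Psi\circ F^{\overline{\gamma,\delta}}$, since everything else is bookkeeping with adjunctions. Here I would unfold both composites on an arbitrary $(D\times A)$-indexed set and use the defining decomposition $\overline{\gamma,\delta}=\langle\delta\circ F\pi_1,\gamma\rangle$, so that the constraint $\overline{\gamma,\delta}(F\pi\, x)=(d,a)$ splits into the pair of equations $\delta(F\pi_1(F\pi\, x))=d$ and $\gamma(F\pi\, x)=a$. The crucial point is that the first of these equations \emph{determines} $d$ from $x$; hence, in forming $\Psi(F^{\overline{\gamma,\delta}}(-))$ the dependent sum over $d$ contracts away, collapsing the $(D\times A)$-indexed fibre to the $A$-indexed fibre $\lambda a.\{x:F\{\cdots\}\sepbar\gamma(F\pi\, x)=a\}$, while the recorded value of $d$ reappears exactly as the recursive $D$-valued component $\lambda(a,x).\delta(F\pi_1(F\pi\, x))$. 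Recognising $\Phi$ and $\Psi$ as mutually inverse, this is precisely the pair defining $F^{\gamma,\delta}$ in \parenref{eq:zygo-refine-functor}, so the isomorphism follows. I would also note that, because \thmref{thm:muFalpha-initial} and \lemref{lem:lifting-commute-sigma} hold for every endofunctor on the families fibration, no pullback-preservation hypothesis on $F$ is required here, in contrast with the partial refinement of \autoref{sec:partial-refinement}.
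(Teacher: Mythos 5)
Your proposal is correct and follows essentially the same route as the paper's proof: obtain the initial $F^{\overline{\gamma,\delta}}$-algebra from \thmref{thm:muFalpha-initial}, verify the interchange isomorphism $F^{\gamma,\delta}\circ\Psi\cong\Psi\circ F^{\overline{\gamma,\delta}}$ by unfolding both composites, splitting the constraint via $\overline{\gamma,\delta}=\langle\delta\circ F\pi_1,\gamma\rangle$ and contracting the determined $d$-component (using $\Phi\circ\Psi\cong\mathrm{Id}$), and then transport the initial object along $\Psi$ via \thmref{thm:alg-adjunctions}. Your closing observation that no pullback-preservation hypothesis on $F$ is needed is consistent with the theorem as stated.
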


\begin{proof}
  Observe that the object of $\Set^A \times_{\Set} \Set/D$
  in \parenref{eq:zygo-refine-spec} is isomorphic to the result of
  applying the functor $\Psi$ defined in \thmref{thm:ir-equiv} to the
  result of refining $\mu F$ by the algebra
  $(\overline{\gamma,\delta}) : F(D \times A) \to D \times
  A$. Indeed,
  \begin{eqnarray*}
    &     & \Psi (\mu F^{\overline{\gamma,\delta}}) \\
    &\cong& \Psi (\lambda (d,a).\, \{ x : \mu F \sepbar
    \fold{\overline{\gamma,\delta}}x = (d,a) \}) \\
    &=    & (\lambda a.\, \{ (d : D, x : \mu F) \sepbar
    \fold{\overline{\gamma,\delta}}x = (d,a) \},\; \lambda (a,(d,x)).d)
  \end{eqnarray*}
  The isomorphism in the first step above is by the refinement process
  from \autoref{sec:refining-inductive}, and the equality in the
  second is by definition of $\Psi$. Now, to apply
  \thmref{thm:alg-adjunctions} we must show that $F^{\gamma,\delta}
  \circ \Psi \cong \Psi \circ F^{\overline{\gamma,\delta}}$. So
  suppose $X$ is in $\Set^{D \times A}$. Then
  \begin{eqnarray*}
    & & F^{\gamma,\delta}(\Psi(X)) \\
    &=& (\lambda a.\, \{x : F\{(D \times A,\Phi(\Psi(X)))\} \sepbar
    \gamma(F\pi x) = a \},\, 
    \lambda (a,x).\, \delta(F\pi_1(F\pi x)) \}) \\
    &\cong& (\lambda a.\, \{x : F\{(D \times A, X)\} \sepbar
    \gamma(F\pi x) = a \},\, \lambda (a,x).\, \delta(F\pi_1(F\pi x)) \})
 \end{eqnarray*}
 Here, we have used the fact that the functors $\Phi$ and $\Psi$ form
 an adjoint equivalence by \thmref{thm:ir-equiv}. On the other hand,
  \begin{eqnarray*}
    & & \Psi (F^{\overline{\gamma,\delta}} X) \\
    &=& \Psi (\lambda (d,a).\, \{x : F\{(D \times A,X)\} \sepbar
    \overline{(\gamma,\delta)}(F \pi x) = (d,a) \}) \\
    &=& \Psi (\lambda (d,a).\, \{x : F\{(D \times A,X)\} \sepbar \gamma(F \pi
    x) = a,\, \delta(F\pi_1(F\pi x)) = d \}) \\
    &\cong& (\lambda a.\, \{(d : D, x : F\{(D \times A,X)\}) \sepbar \gamma(F
    \pi x) = a,\, \delta(F \pi_1 (F \pi x)) = d\},\, \lambda (a, (d,x)).\, d) \\
    &\cong& (\lambda a.\, \{x : F\{(D \times A,X)\} \sepbar \gamma(F \pi x) =
    a\},\, \lambda (a, x).\, \delta(F \pi_1 (F \pi x)))
  \end{eqnarray*}
  by the definition of $\overline{\gamma,\delta}$. So, by the comment
  after \thmref{thm:alg-adjunctions}, $\Psi(\mu
  F^{\overline{\gamma,\delta}}) \cong \mu F^{\gamma,\delta}$. But
  since $\Psi(\mu F^{\overline{\gamma,\delta}})$ is the same
  as \parenref{eq:zygo-refine-spec}, we have
  that \parenref{eq:zygo-refine-spec} can indeed be inductively
  characterised as $\mu F^{\gamma,\delta}$.
\end{proof}

It is also possible to state and prove a generalisation of
\thmref{thm:zygo-refine-correct} in the general setting of a full
cartesian Lawvere category with very strong coproducts, as defined in
\autoref{sec:families}. In this case, we make use of the category
$\mathcal{E}_A \times_{\mathcal{B}} \mathcal{B}/D$, which is defined
by a pullback construction similar to that in
\autoref{sec:initial-algebra-ir}. The use of very strong coproducts is
essential to proving the generalised analogue of the adjoint
equivalence in \thmref{thm:ir-equiv}. In the general fibrational
setting, we have the following definition of $F^{\gamma,\delta}$:
\begin{displaymath}
  F^{\gamma,\delta}(X,f) = (\Sigma_f(\hat{F}_{D \times A}(\Phi(X,f))), \delta \circ F\pi_1 \circ F\pi_{\Phi(X,f)})
\end{displaymath}

The formulation of zygomorphic refinement in the general setting of a
full cartesian Lawvere category with very strong coproducts means that
we can use the process described in \autoref{sec:indexed-refinement}
to derive a fibration in which to perform zygomorphic refinement on
indexed inductive types.

\begin{eorollary}
  \exref{ex:factorial-refine} illustrates refinement by a
  paramorphism, but does not use the full generality of refinement by
  a zygomorphism. We now demonstrate the power of refinement by a
  zygomorphism to mechanically derive an inductive characterisation of
  the data type of lists of rational numbers indexed by their average.

  We specialise the functor $F_{\tyname{List}_B}$ from
  \exref{ex:lists-n-vectors} to get the functor representing the type
  of lists of rational numbers: $F_{\tyname{List}_{\mathbb{Q}}}X = 1 +
  \mathbb{Q} \times X$. We reuse the $F_{\tyname{List}_B}$-algebra
  $\mathit{lengthalg} : F_{\tyname{List}_B} \mathbb{N} \to
  \mathbb{N}$, also from \exref{ex:lists-n-vectors}, whose fold
  computes the length of a list. We also consider the following
  $F_{\tyname{List}_{\mathbb{Q}}}$-algebra $\mathit{sumalg}$, which is
  used to compute the sum of the elements of a list:
  \[\begin{array}{lll}
    \mathit{sumalg} & : & F_{\tyname{List}_{\mathbb{Q}}}\mathbb{Q} \to
\mathbb{Q} \\
    \mathit{sumalg}\ \mathsf{Nil} & = & 0 \\
    \mathit{sumalg}\ (\mathsf{Cons}(q,s)) & = & q + s
  \end{array}\]
  By the standard construction of the product of two $F$-algebras, we
  combine $\mathit{lengthalg}$ and $\mathit{sumalg}$ to produce the
  following single $F_{\tyname{List}_{\mathbb{Q}}}$-algebra whose fold
  will simultaneously compute the sum and length of a list of rational
  numbers: 
  \begin{displaymath}
    \mathit{sumlengthalg} :
    F_{\tyname{List}_{\mathbb{Q}}}(\mathbb{Q} \times \mathbb{N}) \to
    \mathbb{Q} \times \mathbb{N}
  \end{displaymath}
  This algebra will form the $F$-algebra component of the zygomorphism
  by which we will refine $\mu F_{\tyname{List}_\mathbb{Q}}$.

  The morphism component of the zygomorphism by which we will refine
  $\mu F_{\tyname{List}_\mathbb{Q}}$ has carrier $1 +
  \mathbb{Q}$. Here, the non-$\mathbb{Q}$ case caters for empty lists,
  for which the average is not defined. We use $\mathsf{empty}$ and
  $\mathsf{avg}$ as mnemonics for the left and right injections into
  $1 + \mathbb{Q}$. The morphism $\mathit{avg}$ is defined by
 \[\begin{array}{lll}
    \mathit{avg} & : & F_{\tyname{List}_{\mathbb{Q}}}((\mathbb{Q} \times
    \mathbb{N}) \times (1 + \mathbb{Q})) \to 1 + \mathbb{Q} \\
    \mathit{avg}\ \mathsf{Nil} & = & \mathsf{empty} \\
    \mathit{avg}\ (\mathsf{Cons}(q,((s,l),\_))) & = & \mathsf{avg}(\frac{q +
      s}{l + 1})
  \end{array}\]

  Following a similar process to that in
  \exref{ex:factorial-refine}, we can now compute the refinement of
  $\mu F_{\tyname{List}_{\mathbb{Q}}}$ by $\mathit{sumlengthalg}$ and
  $\mathit{avg}$:
  \begin{displaymath}
    \begin{array}{l}
      F_{\tyname{List}_{\mathbb{Q}}}^{\mathit{avg},\mathit{sumlengthalg}}(X,f) = \\
      \quad
      \begin{array}{l}
       ( \lambda a.\ \{ * \sepbar a = \mathsf{empty} \} + \{ (q,a',x : Xa')
\sepbar a = \mathsf{avg}(\frac{q + \pi_1(f(a',x))}{\pi_2(f(a',x)) + 1}) \}, \\
        \lambda (a,x).\ \mathsf{case}\ x\ \mathsf{of}
        \left\{
          \begin{array}{lcl}
            \mathsf{Nil} & \Rightarrow & (0,0) \\
            \mathsf{Cons}(q,a',x) & \Rightarrow & (q + \pi_1(f(a',x)),
\pi_2(f(a',x)) + 1))
          \end{array}
        \right.
      \end{array}
    \end{array}
  \end{displaymath}
  In this definition, we have used $\pi_1(f(a',x))$ to obtain the sum
  of the list underlying $x$, and have likewise used $\pi_2(f(a',x))$
  to obtain its length. Expressing this refinement in Agda 2 gives the
  following definition:
\begin{verbatim}
mutual
  data AvgList : 1 + Rational -> Set where
    nil  : AvgList empty
    cons : (q : Rational) ->
           {a : 1 + Rational} ->
           (x : AvgList a) ->
           AvgList (avg ((q + sum x) / (length x + 1)))

  sum : {a : 1 + Rational} -> AvgList a -> Rational
  sum nil        = 0
  sum (cons q x) = q + sum x

  length : {a : 1 + Rational} -> AvgList a -> Nat
  length nil        = 0
  length (cons q x) = length x + 1
\end{verbatim}

\end{eorollary}

The fact we have generated small indexed inductive-recursive types by
a process of refinement by a zygomorphism leads to the interesting
question of whether it is possible to further refine small indexed
inductive-recursive types by any sort of refinement process. A
thorough investigation of such processes should also involve large
induction-recursion (recall that a large inductive-recursive type
entails the definition of a $\Set$-valued recursive function
simultaneously with the inductive type). The setting of large
inductive-recursive types is much more complicated than small
(indexed) inductive-recursive types, and so we leave investigation of
the refinement of general inductive-recursive types to future
work. Recent work by Malatesta, Altenkirch, Ghani, Hancock and McBride
\cite{malatesta12small} has shown that a large universe of small
inductive-recursive types described by codes is equivalent to the
universe of indexed containers \cite{alten09indexed}. This work may
point to a way to formulate the development of this section in terms
of codes for functors describing types rather than directly in terms
of the functors themselves.

Another interesting avenue for future work is to determine whether the
partial refinement process of \autoref{sec:partial-refinement} can be
combined with the zygomorphic refinement process presented in this
section.

\section{Conclusions, Applications, Related and Future Work}
\label{sec:discussion}

We have given a clean semantic framework for deriving refinements of
inductive types that store computationally relevant information within
the indices of the resulting refined types. We have also shown how
already indexed types can be refined further, how refined types can be
derived even when some elements of the original type do not have
indices, and how refinement by zygomorphisms entails the use of small
indexed induction-recursion for information hiding. In addition to its
theoretical clarity, the theory of refinement we have developed has
potential applications in the following areas:

\medskip

\noindent {\em Dependently Typed Programming:} Often a user is faced
with a choice between building properties of elements of data types
into more sophisticated data types, or stating these properties
externally as, say, pre- and post-conditions. While the former is
clearly preferable because properties can then be statically
type-checked, it also incurs an overhead which can deter its
adoption. Supplying the programmer with infrastructure to produce
refined types as needed can reduce this overhead.

\vspace*{0.03in}

\noindent {\em Libraries:} With the implementation of refinement,
library implementers will no longer need to provide comprehensive
collections of data types, but instead only methods for defining new
data types. Our results also ensure that library implementers will not
need to guess which refinement types will prove useful to programmers,
and can instead focus on providing useful abstractions for creating
more sophisticated data types from simpler ones.

\vspace*{0.03in}

\noindent {\em Implementation:} Current implementations of types such
as {\tt Vector} types store all index information. For example, a
vector of length 3 will store the lengths 3, 2, and 1 of its
subvectors. Since this can be very space-consuming, Brady \emph{et
  al.}~\cite{brady03inductive} have sought to determine when this
information need not be stored in memory. Our work suggests that a
refinement $\mu F^{\alpha}$ can be implemented by simply implementing
the underlying type $\mu F$, since programs requiring indices can
reconstruct these as needed. It could therefore provide a
user-controllable tradeoff between space and time efficiency.

\subsection{Related Work}

The work closest to that reported here is McBride's work on ornaments
\cite{mcbride10ornaments}. McBride defines a type of descriptions of
inductive data types, along with a notion of one description
``ornamenting'' another. Despite the differences between our
fibrational approach and his type-theoretic approach, the notion of
refinement presented in \hyperref[sec:refining-inductive]{Sections
  \ref*{sec:refining-inductive}} and
\hyperref[sec:indexed-refinement]{ \ref*{sec:indexed-refinement}} is
very similar to McBride's notion of an algebraic
ornament. Ornamentation further allows for additional arbitrary data
to be attached to constructors, something that is not possible with
any of the refinement processes that we have discussed in this
paper. On the other hand, ornamentation is restricted to inductive
types and so does not allow for the generation of indexed
inductive-recursive types that we presented in
\autoref{sec:zygo-refine}. The theory of ornamentation has been
developed by Ko and Gibbons \cite{KoGibbons2011OAOAOO}, who examine
the relationship between the ornamental versions of the ``local'' and
``global'' refinement that we discussed in
\autoref{sec:better-solution}. More recently, Dagand and McBride
\cite{DagandMcBride2012funOrn} have described an extension of
McBride's original definition of ornamentation which allows for the
removal of constructors. In our setting, the removal of constructors
is possible with the use of partial refinement
(\autoref{sec:partial-refinement}).

An interesting question for future work is to determine the
relationship between functions defined on data types and functions
defined on refined versions of data types. This question has been
addressed in the setting of McBride's work on ornaments by Ko and
Gibbons \cite{KoGibbons2011OAOAOO} and also by Dagand and McBride
\cite{DagandMcBride2012funOrn}. We have not considered the question of
refinement of functions in this paper, and we leave it as future work
to determine whether or not the fibrational approach taken here can
provide any insight.

Chuang and Lin \cite{chuang06algebra} present a way to derive new
indexed inductive types from existing inductive types and algebras
that is very similar to our basic refinement process in
\autoref{sec:refining-inductive}. Chuang and Lin work in the setting
of the codomain fibration, which makes some calculations easier, but
extensions to partial and zygomorphic refinement more difficult.

A line of research allowing the programmer to give refined types to
constructors of inductive data types was initiated by Freeman and
Pfenning \cite{freeman91refinement}. Freeman and Pfenning defined a
variant of ML that allowed programmers to define refinements of
inductive types by altering the types of constructors, or by
disallowing the use of certain constructors. Refinement of this sort
did not require dependent types. This work was later developed by Xi
\cite{xi00dependently}, Davies \cite{davies05practical} and Dunfield
\cite{dunfield07unified} for extensions of ML-like languages with
dependent types, and by Pfenning \cite{pfenning93refinement} and Lovas
and Pfenning \cite{lovas10refinement} for LF. The work of Kawaguchi
\emph{et al.}  \cite{kawaguchi09type} is also similar. This research
begins with an existing type system and provides a mechanism for
expressing richer properties of values that are well-typeable in that
type system.  It is thus similar to the work reported here, although a
major focus of the work of Freeman and Pfenning and its descendants is
on the decidability of type checking and inference of refined types,
which we have not considered in this paper. On the other hand, we
formally prove that each refinement is isomorphic to the richer,
property-expressing data type it is intended to capture, rather than
leaving this to the programmer to justify on a
refinement-by-refinement basis.

Refinement types have been used in other settings to give more precise
types to programs in existing programming languages (but not
specifically to inductive types). For example, Denney
\cite{denney98refinement} and Gordon and Fournet
\cite{gordon09principles} use subset types to refine the type systems
of ML-like languages. Subset types are also used heavily in the PVS
theorem prover \cite{rushby98subtypes}.

Our results extend the systematic code reuse delivered by generic
programming \cite{amm07,bghj07,bdj03}: in addition to generating new
programs we can also generate new types from existing types. This area
is being explored in Epigram \cite{chapman10gentle}, with codes for
data types being represented within a predicative intensional
system. This enables programs to generate new data types. It should be
possible to implement our refinement process using similar techniques.

In addition to the specific differences between our work and that
discussed above, a distinguishing feature of ours is the semantic
methodology we use to develop refinement. We believe that this
methodology is new. We also believe that a semantic approach is
important: it can serve as a principled foundation for refinement, as
well as provide a framework in which to compare different
implementations. Moreover, it may lead to new algebraic insights into
refinement that complement the logical perspective of previous work.

\bigskip

{\em Acknowledgements:} We thank Conor McBride, Frank Pfenning, and
Pierre-Evariste Dagand for helpful comments on this work. The
anonymous FoSSaCS and LMCS reviewers also provided useful
feedback. This work was funded by EPSRC grant EP/G068917/1.

\bibliographystyle{plain}

\bibliography{refinement-journal}

\end{document}